\documentclass[11pt,onecolumn]{IEEEtran}
\usepackage[dvipsnames]{xcolor}
\usepackage[noadjust]{cite}
\usepackage[pdftex]{graphicx}
\graphicspath{{figs/pdf/}}
\usepackage{amsmath}
\usepackage{amsfonts}
\usepackage{array}
\usepackage{url}
\usepackage{hyperref}
\usepackage{amsthm}
\usepackage[capitalise,noabbrev]{cleveref}
\usepackage{xspace}

\newtheorem{theorem}{Theorem}
\newtheorem{lemma}[theorem]{Lemma}
\newtheorem{prop}[theorem]{Proposition}

\theoremstyle{definition}
\newtheorem{definition}{Definition}
\newtheorem{example}{Example}
\theoremstyle{remark}
\newtheorem{remark}{Remark}
\newcommand{\etal}{et al.\xspace}
\newcommand{\ie}{i.e.,\xspace}

\newcommand{\CodeName}{Convertible Code\xspace}
\newcommand{\CodeNames}{\CodeName{s}\xspace}
\newcommand{\Codename}{Convertible code\xspace}

\newcommand{\codename}{convertible code\xspace}
\newcommand{\codenames}{\codename{}s\xspace}
\newcommand{\mdscodenames}{MDS convertible codes\xspace}
\newcommand{\regime}{merge regime\xspace}
\newcommand{\conversion}{conversion\xspace}
\newcommand{\codeconversion}{code conversion\xspace}
\newcommand{\codeconversions}{\codeconversion{}s\xspace}
\newcommand{\convert}{convert\xspace}
\newcommand{\converted}{converted\xspace}
\newcommand{\conversions}{\conversion{}s\xspace}
\newcommand{\cost}{access cost\xspace}
\newcommand{\optimal}{access-optimal\xspace}
\newcommand{\stripe}{stripe\xspace}
\newcommand{\encvec}{encoding vector\xspace}
\newcommand{\encvecs}{\encvec{}s\xspace}
\newcommand{\block}{block\xspace}
\newcommand{\blocks}{\block{}s\xspace}
\newcommand{\devices}{disks\xspace}
\providecommand{\adots}{\text{\reflectbox{\(\ddots\)}}}
\newcommand{\IdMat}{\mathbf{I}}
\newcommand{\V}[1]{\mathbf{#1}}
\newcommand{\field}[1]{\mathbb{F}_{#1}}
\newcommand{\Code}{\mathcal{C}}
\newcommand{\initial}{I}
\newcommand{\final}{F}
\newcommand{\ICode}{\Code^{\initial}}
\newcommand{\FCode}{\Code^{\final}}

\newcommand{\IGen}{\mathbf{G}^{\initial}}
\newcommand{\IGenp}{\tilde{\mathbf{G}}^{\initial}}
\newcommand{\FGen}{\mathbf{G}^{\final}}
\newcommand{\IPar}{\mathbf{P}^{\initial}}
\newcommand{\FPar}{\mathbf{P}^{\final}}
\newcommand{\ICol}[1]{\V{g}^{\initial}_{#1}}
\newcommand{\FCol}[1]{\V{g}^{\final}_{#1}}
\newcommand{\IColp}[2]{\V{\tilde{g}}^{\initial}_{{#1},{#2}}}
\newcommand{\Down}{\mathcal{D}}
\newcommand{\Vectors}{\mathcal{V}}
\newcommand{\Vectorss}{\mathcal{W}}

\newcommand{\StripeIndexSet}[1]{\mathcal{K}_{#1}}
\newcommand{\StripeICols}[1]{\mathcal{S}^{\initial}_{#1}}
\newcommand{\AllICols}{\mathcal{S}^{\initial}}
\newcommand{\AllFCols}{\mathcal{S}^{\final}}
\newcommand{\NewCols}{\mathcal{N}}

\newcommand{\UnchangedStripeCols}[1]{\mathcal{U}_{#1}}
\newcommand{\ParamFormat}[4]{({#1},{#2};{#3},{#4})}
\newcommand{\Params}[4]{\ParamFormat{\In = {#1}}{\Ik = {#2}}{\Fn = {#3}}{\Fk = {#4}}}
\newcommand{\ParamsDefault}{\ParamFormat{\In}{\Ik}{\Fn}{\Fk}}
\newcommand{\ParamCode}[4]{\(\Params{#1}{#2}{#3}{#4}\) \codename}
\newcommand{\CodeDefault}{\(\ParamsDefault\) \codename}
\newcommand{\RegimeCodeDefault}{\(\ParamFormat{\In}{\Ik}{\Fn}{\Fk = \Cs\Ik}\) \codename}
\newcommand{\RegimeCodesDefault}{\RegimeCodeDefault{}s}
\newcommand{\DownSize}{d}
\newcommand{\MinDownSizeDefault}{\DownSize^{*}\ParamsDefault}
\newcommand{\Partition}{\mathcal{P}}
\newcommand{\IPart}{\Partition_{\initial}}
\newcommand{\FPart}{\Partition_{\final}}
\newcommand{\In}{n^{\initial}}
\newcommand{\Ik}{k^{\initial}}
\newcommand{\Ir}{r^{\initial}}
\newcommand{\Fn}{n^{\final}}
\newcommand{\Fk}{k^{\final}}
\newcommand{\Fr}{r^{\final}}
\newcommand{\NNodes}{M}
\newcommand{\Transition}[2]{T_{{#1}\!\to{#2}}}
\newcommand{\Cs}{\lambda}
\newcommand{\Msg}{\V{m}}
\newcommand{\Constructible}[1]{\ensuremath{#1}-column constructible}
\newcommand{\BlockConstructible}[1]{\ensuremath{#1}-column block-constructible}
\newcommand{\hankelone}{Hankel-I}
\newcommand{\hankeltwo}{Hankel-II}
\DeclareMathOperator{\Perm}{Perm}
\DeclareMathOperator{\Sign}{sgn}
\DeclareMathOperator{\Id}{\sigma_{\mathrm{id}}}
\DeclareMathOperator{\Proj}{proj}

\DeclareMathOperator{\Cols}{cols}
\DeclareMathOperator{\Lcm}{lcm}
\DeclareMathOperator{\Mod}{mod}
\DeclareMathOperator{\Diag}{diag}

\begin{document}

\title{\CodeNames: Efficient Conversion of Coded Data in Distributed Storage}

\author{\IEEEauthorblockN{Francisco Maturana and K. V. Rashmi}

\IEEEauthorblockA{Computer Science Department\\
Carnegie Mellon University\\
\{fmaturan, rvinayak\}@cs.cmu.edu}}

\maketitle

\begin{abstract}
    Large-scale distributed storage systems typically use erasure codes to provide durability of data in the face of failures.
    A set of \(k\) blocks to be stored is encoded using an \([n, k]\) code to generate \(n\) blocks that are then stored on different storage nodes.
    The redundancy configuration (that is, the parameters \(n\) and \(k\)) is chosen based on the failure rates of storage devices, and is typically kept constant.
    However, a recent work by Kadekodi et al.\ shows that the failure rate of storage devices vary significantly over time, and that adapting the redundancy configuration in response to such variations provides significant benefits: a \(11\%\) to \(44\%\) reduction in storage space requirement, which translates to enormous amounts of savings in resources and energy in large-scale storage systems.
    However, converting the redundancy configuration of already encoded data by simply re-encoding (the default approach) requires significant overhead on system resources such as accesses, device IO, network bandwidth, and compute cycles.
    
    In this work, we first present a framework to formalize the notion of \emph{\codeconversion}---the process of converting data encoded with an \([\In, \Ik]\) code into data encoded with an \([\Fn, \Fk]\) code 
    while maintaining desired decodability properties, such as the maximum-distance-separable (MDS) property.
    We then introduce \emph{\codenames}, a new class of codes that allow for \codeconversions in a resource-efficient manner.
    For an important parameter regime (which we call the \regime) along with the widely used linearity and MDS decodability constraint, we prove tight bounds on the number of nodes accessed during \codeconversion.
    In particular, our achievability result is an explicit construction of MDS \codenames that are optimal for all parameter values in the \regime\ albeit with a high field size. 
    We then present explicit low-field-size constructions of optimal MDS \codenames\ for a broad range of parameters in the \regime.
    Our results thus show that it is indeed possible to achieve \codeconversions\ with significantly lesser resources as compared to the default approach of re-encoding.
\end{abstract}

\section{Introduction}\label{sec:intro}
Large-scale distributed storage systems form the bedrock of modern data processing systems. 
Such storage systems comprise hundreds of thousands of storage devices and routinely face failures in their day-to-day operation~\cite{ford2010availability,rashmi2013hotstorage,rashmi2014hitchhiker,asterisxoring}. 
In order to provide resiliency against such failures, storage systems employ redundancy, typically in the form of erasure codes~\cite{ghemawat2003google,facebookECsavings2010_forACM,huang2012erasure,hadoophdfsec}.
Under erasure coding, a set of \(k\) data blocks to be stored is encoded using an \([n, \ k]\) code to generate \(n\) coded blocks.
A set of \(n\) encoded blocks that correspond to the same \(k\) original data blocks is called a ``\stripe''.
Each of the \(n\) coded blocks in a stripe is stored on a different storage node (typically chosen from different failure domains).
The amount of redundancy added using an erasure code is a function of the redundancy configuration, that is, parameters \(n\) and \(k\).
These parameters are chosen so as to achieve predetermined thresholds on reliability and availability, such as the mean-time-to-data-loss (MTTDL).

The key factor that determines MTTDL for chosen parameters is the failure rate of the storage devices
In a recent work~\cite{HEART}, Kadekodi \etal show that failure rates of storage devices in large-scle storage systems vary significantly over time (for example, by \textit{more than 3.5-fold} for certain disk families).
Thus, it is advantageous to change the redundancy configuration in response to such variations
Kadekodi \etal~\cite{HEART} present a case for tailoring erasure code parameters to the observed failure rates and show that an \(11\%\) to \(44\%\) reduction in storage space can be achieved by adapting the redundancy configuration according to the changing failure rates.
Such a reduction in storage space requirement translates to significant savings in the cost of resources and energy consumed in large-scale storage systems.

In particular, disk failure rates exhibit a \emph{bathtub curve} during the lifetime of disks, which is characterized by three phases: \emph{infancy}, \emph{useful life}, and \emph{wearout}, in that order~\cite{HEART}.
Disk failure rate during infancy and wearout can be multiple times higher than during useful life.
As a consequence, the chosen redundancy setting will likely be too high for some periods, which is a waste of resources, and too low for other periods, which increases the risk of data loss.
Kadekodi \etal~\cite{HEART} address this problem by changing the code rate (that is, the parameters of the erasure coding scheme) as the devices go through different phases of life. 
For example, given a group of nodes with certain failure characteristics, the system may use a \([14, 10]\) code during infancy, then \convert\ to a \([24, 20]\) code during useful life, and finally \convert\ back to a \([14, 10]\) code during wearout.
We refer the reader to~\cite{HEART} for an in-depth study on failure rate variations and the advantages of adapting the erasure-code parameters with these variations.   

Adapting the redundancy configuration requires modifying the code rate for all the \stripe{s} that have at least one block stored on a certain disk group when the failure rate of that disk group changes by more than a threshold amount~\cite{HEART}. 
Changing the code rate, that is the parameters of the erasure code, employed on already encoded data can be highly resource intensive, potentially requiring to access multiple storage devices, read large amounts of data, transfer it over the network, and re-encode it.
Modifying the code parameters using the default approach requires reading at least \(k\) blocks from each stripe, transferring over the network and re-encoding.
In large-scale storage systems, \devices are deployed in large batches, and hence a large number of \devices go through failure-rate transitions concurrently.
Thus, adapting redundancy configuration by using the default approach of re-encoding generates highly varying and prohibitively large load spikes, which adversely affect the foreground traffic.
This places significant burden on precious cluster resources such as accesses, disk IO, network bandwidth, and computation cycles (CPU). 
Furthermore, in some cases these conversions need be performed urgently, such as the case where there is an unexpected rise in failure rates and \conversion\ is necessary to reduce the risk of data loss.
In such cases, it is necessary to be able to perform fast conversions.
Motivated by these applications, in this paper, \emph{we initiate a formal study of such \codeconversions} by exploring the following questions:
\begin{itemize}
    \item What are the fundamental limits on resource consumption of \codeconversions?
    \item How can one design codes that efficiently facilitate \codeconversions?
\end{itemize}

Formally, the goal is to \convert\ data that is already encoded using an \([\In, \Ik]\) code (denoted by \(\ICode\)) into data encoded using an \([\Fn, \Fk]\) code (denoted by \(\FCode\))\footnote{The superscripts \(\initial\) and \(\final\) stand for initial and final respectively, representing the initial and final state of the \conversion.}, with desired constraints on decodability such as both initial and final codes satisfying the maximum-distance-separable (MDS) property. 
Clearly, it is always possible to read the original data (and decode if needed) and re-encode according to \(\FCode\).
However, such a re-encoding approach requires accessing several nodes (\(\Ik\) nodes per stripe for MDS codes), reading out all the data, transferring over the network, and re-encoding, which consumes large amounts of access, disk IO, network bandwidth, and CPU resources.

The question then is whether one can perform such \conversions\ in a more resource-efficient manner, while satisfying the decodability constraints.
We now present an example showing how resource-efficient \conversion\ can be achieved in a simple manner for certain parameters.

\begin{example}
Consider \(\In = \Ik + 1,\, \Fn = \Fk + 1\), and \(\Fk = 2\Ik\), with the requirement that both \(\ICode\) and \(\FCode\) are MDS.
This \conversion\ can be achieved by ``merging'' two stripes of the initial code into one stripe, for each stripe of the final code.
Let us focus on the number of \blocks\ accessed during \conversion.
Using the default approach of re-encoding to achieve the \conversion\ requires accessing \(\Ik\) \blocks\ from two stripes of encoded data under \(\ICode\) (initial stripes) to create one stripe of encoded data under \(\FCode\) (final stripe).
That is, each stripe of encoded data under the final code \(\FCode\) requires accessing \(2\Ik\) \blocks.
Alternatively, as depicted in \cref{fig:combine_ex}, one can choose \(\ICode\) and \(\FCode\) to be systematic, single-parity-check codes, with the parity \block\ holding the XOR of the data \blocks\ in each stripe (shown with a shaded box in the figure).
To \convert from \(\ICode\) to \(\FCode\), one can compute the XOR between the single parity in each stripe, and store the result as the parity \block\ for the stripe under \(\FCode\).
This alternative approach requires accessing only two \blocks\ for each final stripe, and thus is significantly more efficient in the number of accessed \blocks\ as compared to the default approach.
\end{example}
\begin{figure}
    \centering
    \includegraphics[width=.5\textwidth]{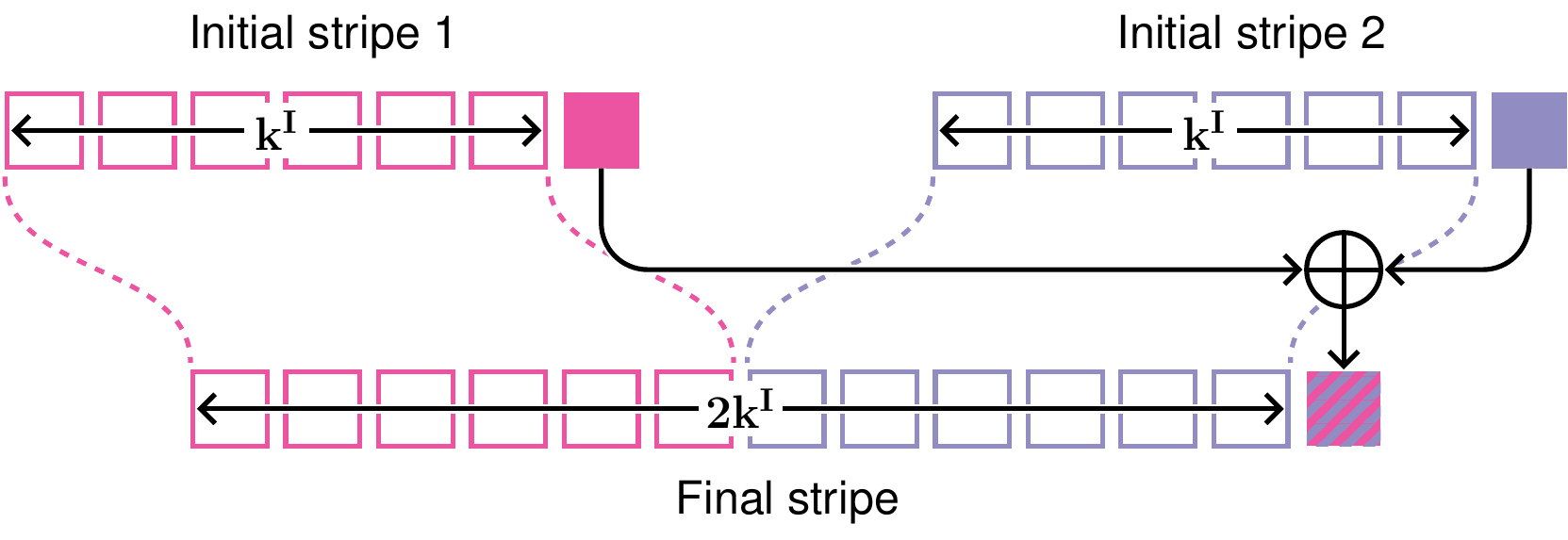}
    \caption{
    Example of \codeconversion\ without re-encoding: two stripes of a \([\Ik + 1, \Ik]\) single-parity-check code become one stripe of a \([2\Ik + 1, 2\Ik]\) single-parity-check code.
    The parity blocks are shown shaded.
    The data blocks from each stripe are preserved, and the single parity block from the final stripe is computed as the XOR of the parities from each of the initial stripes.
    }
    \label{fig:combine_ex}
\end{figure}

In this paper, we first propose a novel framework that formalizes the concept of \emph{\codeconversion}, that is, the process of converting data encoded with an \([\In, \Ik]\) code into data encoded with an \([\Fn, \Fk]\) code while maintaining desired decodability properties, such as maximum-distance-separable (MDS) property.
We then introduce a new class of code pairs, which we call \emph{\codenames}, which allow for resource-efficient \conversions.
We begin the study of this new class of code pairs, by focusing on an important regime where \(\Fk = \Cs\Ik\) for any integer \(\Cs \geq 2\) with arbitrary values of \(\In\) and \(\Fn\), which we call the \emph{\regime}.
Furthermore, we focus on  the \emph{\cost} of \codeconversion, which corresponds to \emph{the total number of nodes that participate in the \conversion}.
Keeping the number of nodes accessed small makes \conversion\ less disruptive and allows the unaffected nodes to remain available for serving client requests.
In addition, reducing the number of accesses also reduces disk IO, network bandwidth and CPU consumed.

We prove tight bounds on the \cost\ of \conversions\ for linear MDS codes in the \regime. In particular, our achievability result is an explicit construction of MDS \codenames that are \optimal for all parameters values in the \regime\ albeit with a high field size. 
Finally, we present a sequence of practical low-field-size constructions of \optimal\ \mdscodenames\ in the \regime\ based on Hankel arrays. These constructions lead to a tradeoff between field size and the parameter values they cover with the two extreme points corresponding to (1) \(\Fn - \Fk \leq \lfloor (\In - \Ik)/\Cs \rfloor\) requiring a field size \(q \geq \max\{\In - 1, \Fn - 1\}\), and (2) \(\Fn - \Fk \leq \In - \Ik - \Cs + 1\) requiring a field size \(q \geq \Ik\Ir\).
Thus, our results show that \codeconversions\ can be achieved with a significantly lesser resource overhead as compared to the default approach of re-encoding.
Furthermore, all the constructions presented have the added benefit that they continue to be optimal for a wide range of parameters, which allows to handle the case where the parameters of the final code are unknown a priori.

The rest of the paper is organized as follows.
\Cref{sec:related} discusses related work. 
\Cref{sec:conversion} formalizes the notion of \codeconversions\ and presents a framework for studying \codenames.
\Cref{sec:merge} shows the derivation of lower bounds on the \cost\ of \conversions\ for linear MDS codes in the \regime.
\Cref{sec:construction} describes a general explicit construction for MDS codes in the \regime that meets the \cost\ lower bounds, albeit with a high field size.
\Cref{sec:simple-hankel} describes low-field-size constructions for MDS codes in the \regime, which provide a tradeoff between field size and range of parameter values they cover.
Finally, \cref{sec:conclusion} presents our conclusions and discuss future directions.

\section{Related work}\label{sec:related}

There is extensive literature on the use of erasure codes for reliable data storage. 
In storage systems, failures can be effectively modeled as erasures, and thereby, erasure codes can be used to provide tolerance to failures, at the cost of some storage overhead~\cite{plank05,patterson1988case}.
{Maximum distance separable} (MDS) codes are often used for this purpose, since they achieve the optimal tradeoff between failure tolerance and storage overhead.
A well-known and often-used family of MDS codes is Reed-Solomon codes~\cite{TECC78}. 

When using erasure codes in storage systems, a host of other overheads and performance metrics, in addition to storage overhead, comes into picture. 
Encoding/decoding complexity, node repair performance, degraded read performance, field size, and other metrics can significantly affect real system performance.
Several works in the literature have studied these aspects.

The encoding and decoding of data, and the finite field arithmetic that they require, can be compute intensive.
Motivated by this, \emph{array codes}~\cite{BBBM95,xu1999x,huang2008star,hafner2005weaver} are designed to use XOR operations exclusively, which are typically faster to execute, and aim to decrease the complexity of encoding and decoding.

The repair of failed nodes can incur a large amount of data read and transfer, burdening device IO and network bandwidth.
Several approaches have been proposed to alleviate the impact of repair operations.
Dimakis et al.~\cite{DGWWR10} proposed a new class of codes called \emph{regenerating codes} that minimize the amount of network bandwidth consumed during repair operations.
Several explicit constructions of regenerating codes have been proposed (for example, see \cite{rashmi2011optimal,shah2011interference,suh2011journal,tamo2013zigzag,cadambe2013asymptotic,ye2016explicit,papailiopoulos2013repairTransactions,goparaju2017minimum,chowdhury2018newconstructions,mahdaviani2018bandwidth,mahdaviani2018product}) as well as generalizations (for example, see \cite{shah2010flexible, shum2011cooperative, abdrashitov2017storage}).
It has been shown that meeting the lower bound on the repair bandwidth requirement when MDS property and high rate are desired necessitates a large value for the so called ``sub-packetization'' \cite{tamo2014access,goparaju2014improved,balaji2018tight,alrabiah2019exponential}, which negatively affects certain key performance metrics in storage systems~\cite{rashmi2014hitchhiker}. 
To overcome this issue, several works~\cite{rashmi2013piggybacking,rashmi2017piggybacking,guruswami2017mds} have proposed code constructions that relax the requirement of meeting lower bounds on IO and bandwidth requirements for repair operations.
For example, the Piggybacking framework~\cite{rashmi2017piggybacking} provides a general framework to construct repair-efficient codes by transforming any existing codes, while allowing a small sub-packetization (even as small as \(2\)).
The above discussed works construct vector codes in order to improve the efficiency of repair operation.
The papers~\cite{shanmugam2014repair,guruswami2016repairing,dau2018repairing} propose repair algorithms for (scalar) Reed-Solomon codes that reduce the network bandwidth consumed during repair by downloading elements from a subfield rather than the finite field over which the code is constructed.
Network bandwidth consumed is another metric to optimize for during \conversion. In this paper, we only focus on the access cost.

Another class of codes, called \emph{local codes}~\cite{gopalan2012locality,papailiopoulos2014locally,tamo2014family,kamath2014codes,cadambe2015bounds,tamo2016optimal,tamo2016bounds,barg2017locally,agarwal2018combinatorial,mazumdar2018capacity}, focuses on the locality of codeword symbols during repair, that is, the number of nodes that need to be accessed when repairing a single failure.
Local codes improve repair and degraded read performance, since missing information can be recovered without having to recover the full data.
The locality metric for repair that local codes optimize for is similar to the access cost metric for \conversion\ that we optimize for in this work as both these metrics aim to minimize the number of nodes accessed.
    
There are several classical techniques for creating new codes from existing ones~\cite{TECC78}.
For example, techniques such as \emph{puncturing}, \emph{extending}, \emph{shortening}, and others which can be used to modify codes.
These techniques, however, do not consider the cost of performing such modifications to data that is {already encoded}, which is the focus of our work.

Several works \cite{rashmi2011enabling,MZT18} study the problem of two stage encoding: first generating a certain number of parities during the encoding process and then adding additional parities.
As discussed in~\cite{rashmi2011enabling}, adding additional parities can be conceptually viewed as a repair process by considering the new parity nodes to be generated as failed nodes. Furthermore, as shown in~\cite{shah2011interference}, for MDS codes, the bandwidth requirement for \textit{repair of even a single node} is lower bounded by the same amount as in regenerating codes that require repair of \textit{all} nodes.
Thus one can always employ a regenerating code to add additional parities with minimum bandwidth overhead.
However, when MDS property and high rate are desired, as discussed above, using regenerating codes requires a large sub-packetization.
The paper~\cite{MZT18} employs the Piggybacking framework~\cite{rashmi2013piggybacking,rashmi2017piggybacking} to construct codes that overcome the issue of large sub-packetization factor.
The scenario of adding a fixed number of additional parities, when viewed under the setting of \conversions, corresponds to having \(\Ik = \Fk\) and \(\In < \Fn\).

Another related work \cite{XSBP15} proposes a storage system that uses two erasure codes.
One of the codes prioritizes the network bandwidth required for recovery, while the other prioritizes storage overhead, and data is \converted\ between the two codes according to the workload.
This application constitutes another motivation for resource-efficient conversions.
To reduce the cost of \codeconversion, the system~\cite{XSBP15} uses {product codes}~\cite{TECC78} and {locally repairable codes}~\cite{tamo2014family}, and the local parities are leveraged during \conversion.
The authors, however, choose codes from these two families \emph{ad hoc}, and do not focus on the problem of designing these codes to minimize the cost of \codeconversion.

Several works~\cite{konwar2017layered,cadambe2018ares,wang2017multi} study the update operation in erasure coded storage systems, and the problem of maintaining consistency in such mutable storage systems. 
The cost of updates is another metric to optimize for in \codenames, which we do not consider in this paper.
In the current paper, the focus is on immutable storage systems which comprise a vast majority of large-scale storage systems.

\section{A framework for studying \codeconversions} \label{sec:conversion}

In this section, we formally define and study \emph{\codeconversions} and introduce \emph{\codenames}.

Suppose one wants to \convert\ data that is already encoded using an \([\In, \Ik]\) initial code \(\ICode\) into data encoded using an \([\Fn, \Fk]\) final code \(\FCode\).
Assume, without loss of generality, that each node has a fixed storage capacity \(\alpha\).
In the initial and final configurations, the system stores the same information, but encoded differently.
In order to capture the changes in the dimension of the code during \conversion, we consider \(\NNodes = \Lcm(\Ik, \Fk)\) number of ``message'' symbols (\ie the data to be stored) over a finite field \(\field{q}\), denoted by \(\Msg \in \field{q}^{\NNodes}\).
This corresponds to multiple stripes in the initial and final configurations.
We note that this \emph{need for considering multiple stripes in order to capture the smallest instance of the problem} deviates from existing literature on the repair problem in distributed storage codes where a single stripe is sufficient to capture the problem.

Since there are multiple stripes, we first specify an \emph{initial partition} \(\IPart\) and a \emph{final partition} \(\FPart\) of the set \([\NNodes]\), which map the message symbols of \(\Msg\) to their corresponding initial and final stripes.
The initial partition \(\IPart \subseteq 2^{[\NNodes]}\) is composed of \(\NNodes / \Ik\) disjoint subsets of size \(\Ik\), and the final partition \(\FPart \subseteq 2^{[\NNodes]}\) is composed of \(\NNodes / \Fk\) disjoint subsets of size \(\Fk\).
In the initial (respectively, final) configuration, the data indexed by each subset \(S \in \IPart\ (\text{respectively}, \FPart)\) is encoded using the code \(\ICode\ (\text{respectively}, \FCode)\).
The codewords \(\{\ICode(\Msg_{S}),\, S \in \IPart\}\) are referred to as \emph{initial stripes}, and the codewords \(\{\FCode(\Msg_{S}),\, S \in \FPart\}\) are referred to as \emph{final stripes}, where \(\Msg_{S}\) corresponds to the projection of \(\Msg\) onto the coordinates in \(S\) and \(\Code(\Msg_S)\) is the encoding of \(\Msg_S\) under code \(\Code\).
We now formally define \codeconversion\ and \codenames.

\begin{definition}[\textbf{Code conversion}] \label{def:conversion}
A \emph{\conversion} from an initial code \(\ICode\) to a final code \(\FCode\) with initial partition \(\IPart\) and final partition \(\FPar\) is a procedure, denoted by \(\Transition{\ICode}{\FCode}\), that for any \(\Msg\), takes the set of initial stripes \(\{\ICode(\Msg_S) \mid S \in \IPart\}\) as input, and outputs the corresponding set of final stripes \(\{\FCode(\Msg_S) \mid S \in \FPart\}\).
\end{definition}

The descriptions of the initial and final partitions and codes, along with the \conversion\ procedure, define a \codename.

\begin{definition}[\textbf{\Codename}] \label{def:code}
    A \CodeDefault\ over \(\field{q}\) is defined by: (1) a pair of codes \((\ICode, \FCode)\) where \(\ICode\) is an \([\In, \Ik]\) code over \(\field{q}\) and \(\FCode\) is an \([\Fn, \Fk]\) code over \(\field{q}\); (2) a pair of partitions \(\IPart, \FPart\) of \([\NNodes = \Lcm(\Ik, \Fk)]\) such that each subset in \(\IPart\) is of size \(\Ik\) and each subset in \(\FPart\) is of size \(\Fk\); and (3) a \conversion\ procedure \(\Transition{\ICode}{\FCode}\) that on input \(\{\ICode(\Msg_S) \mid S \in \IPart\}\) outputs \(\{\FCode(\Msg_S) \mid S \in \FPart\}\) for all \(\Msg \in \field{q}^M\). 
\end{definition}

In addition, typically additional constraints on the distance (\ie decodability) of the codes \(\ICode\) and \(\FCode\) would be imposed, such as requiring both codes to be MDS.

\begin{example}
Suppose we want to transition from a \([\In = 3, \Ik = 2]\) code \(\ICode\) to a \([\Fn = 5, \Fk = 3]\) code \(\FCode\).
We consider data \(\Msg\) of length \(M = \Lcm(\Ik = 2, \Fk = 3) = 6\).
In the initial configuration, the data is partitioned into three stripes, each one composed of three \blocks\ encoding two message symbols.
For example, if \(\IPart = \{\{1,2\},\{3,4\},\{5,6\}\}\) then the initial stripes are \(\ICode(\Msg_1,\Msg_2),\, \ICode(\Msg_3,\Msg_4)\), and \(\ICode(\Msg_5,\Msg_6)\).
In the final configuration, the data is partitioned into two stripes, each one composed of five \blocks\ encoding three message symbols.
For example, if \(\FPart = \{\{1,2,3\},\{4,5,6\}\}\) then the final stripes are \(\FCode(\Msg_1,\Msg_2,\Msg_3)\), and \(\FCode(\Msg_4,\Msg_5,\Msg_6)\).
Note that a different valid final partition could have been \(\FPart = \{\{1,3,5\},\{2,4,6\}\}\).

The \conversion\ procedure \(\Transition{\ICode}{\FCode}\) must take \(\{\ICode(\Msg_1,\Msg_2),\,\ICode(\Msg_3,\Msg_4),\,\ICode(\Msg_5,\Msg_6)\}\) as input, and output \(\{\FCode(\Msg_1,\Msg_2,\Msg_3),\allowbreak \FCode(\Msg_4,\Msg_5,\Msg_6)\}\).
In this example, the codes \(\ICode,\, \FCode\), the partitions \(\IPart,\, \FPart\), and procedure \(\Transition{\ICode}{\FCode}\) define a \ParamCode{3}{2}{5}{3}.
\end{example}

\begin{remark}
Note that the definition of \codenames (\cref{def:code}) assumes that \(\ParamsDefault\) are fixed \emph{a priori}, and are known at code construction time. This will be helpful in understanding the fundamental limits of the \conversion process. In practice, this assumption might not always hold. For example, the parameters \(\Fn, \Fk\) depend on the node failure rates that are yet to be observed.
Interestingly, it is indeed possible for a \CodeDefault\ to facilitate \conversion\ for multiple values of \(\Fn, \Fk\), as is the case for the code constructions presented in this paper.
\end{remark}

The overhead of \conversion in a \codename is determined by the cost of the \conversion\ procedure \(\Transition{\ICode}{\FCode}\), as a function of the parameters \(\ParamsDefault\).
Towards minimizing the overhead of the \conversion, our general objective is to design codes \((\ICode, \FCode)\), partitions \((\IPart, \FPart)\) and \conversion\ procedure \(\Transition{\ICode}{\FCode}\) that satisfy \cref{def:code} and minimize the \conversion\ cost for given parameters \(\ParamsDefault\), subject to desired decodability constraints on \(\ICode\) and \(\FCode\).

Depending on the relative importance of various resources in the cluster, one might be interested in optimizing the \conversion\ with respect to various types of costs such as access, network bandwidth, disk IO, CPU, etc., or a combination of these costs.
The general formulation of \codeconversions\ above provides a powerful framework to theoretically reason about \codenames.
In what follows, we will focus on a specific regime and a specific cost model.

\section{Lower bounds on \cost\ of \codeconversion} \label{sec:merge}

The focus of this section is on deriving lower bounds on the \cost of \codeconversion.
We consider one of the fundamental regimes of \codenames, that corresponds to merging several initial stripes of a code into a single, longer final stripe.
Specifically, the \codenames\ in this regime have \(\Fk = \Cs\Ik\), where \(\Cs \geq 2\) is the number of initial stripes merged, with arbitrary values of \(\In\) and \(\Fn\).
We call this regime as \textit{\regime}.
We additionally require that both the initial and final code are linear and MDS.
Since linear MDS codes are widely used in storage systems and are well understood in the Coding Theory literature, they constitute a good starting point.

We focus on the \textit{access cost} of \codeconversion, that is, the number of \blocks\ that are affected by the \conversion.
The \cost\ of \conversion\ measures the total number of \blocks\ \emph{accessed} during \conversion.
Each new \block\ needs to be {written}, and hence requires accessing a node.
Similarly, each \block\ from the initial stripes that is {read}, requires accessing a node.
Therefore, minimizing \cost\ amounts to \textit{minimizing the sum of the number of new \blocks\ written and the number of \blocks\ read from the initial stripes}.\footnote{Readers who are familiar with the literature on regenerating codes might observe that \codenames optimizing for the \cost are ``scalar'' codes as opposed to being ``vector'' codes.}
Keeping this number small makes \codeconversion\ less disruptive and allows the unaffected nodes to remain available for application-specific purposes throughout the procedure, for example, to serve client requests in a storage system.
Furthermore, reducing the number of accesses also reduces disk IO, network bandwidth and CPU consumed.

In \cref{sec:construction}, we will show that the lower bounds on the \cost derived in this section are in fact achievable.
Therefore, we refer to \mdscodenames\ in the \regime\ that achieve these lower bounds as \emph{\optimal}.

\begin{definition}[Access-optimal]
    A linear MDS \RegimeCodeDefault\ is said to be \emph{\optimal} if and only if it attains the minimum \cost\ over all linear MDS \RegimeCodesDefault.
\end{definition}

We first start with a description of the notation in \cref{sec:notation} and then derive lower bounds on the \cost in \cref{sec:lower-bounds}.

\subsection{Notation} \label{sec:notation} 

Let \(\ICode\) be an \([\In, \Ik]\) MDS code over field \(\field{q}\), specified by generator matrix \(\IGen\), with columns (that is, \encvecs) \(\{\ICol{1}, \ldots, \ICol{\In}\} \subseteq \field{q}^{\Ik}\).
Let \(\Cs \geq 2\) be an integer, and let \(\FCode\) be an \([\Fn, \Fk = \Cs\Ik]\) MDS code over field \(\field{q}\), specified by generator matrix \(\FGen\), with columns (that is, \encvecs) \(\{\FCol{1}, \ldots, \FCol{\Fn}\} \subseteq \field{q}^{\Fk}\).
Let \(\Ir = \In - \Ik\) and \(\Fr = \Fn - \Fk\).
When \(\ICode\) and \(\FCode\) are systematic, \(\Ir\) and \(\Fr\) correspond to the initial number of parities and final number of parities, respectively.
All vectors are assumed to be column vectors. We will use the notation \(\V{v}[l]\) to denote the \(l\)-th coordinate of a vector \(\V{v}\).

We will represent \emph{all} the code symbols in the initial stripes as being generated by a single \(\Cs\Ik \times \Cs\In\) matrix \(\IGenp\), with \encvecs\ \(\{\IColp{i}{j} \mid i \in [\Cs], j \in [\In]\} \subseteq \field{q}^{\Fk}\).
This representation can be viewed as embedding the column vectors of the generator matrix \(\IGen\) in an \(\Cs\Ik\)-dimensional space, where the index set \(\StripeIndexSet{i} = \{(i - 1)\Ik + 1, \ldots, i\Ik\}, i \in [\Cs]\) corresponds to the \encvecs\ for initial stripe \(i\).
Let \(\IColp{i}{j}\) denote the \(j\)-th encoding vector in the initial stripe \(i\) in this (embedded) representation. Thus, \(\IColp{i}{j}[l] = \ICol{j}[l - (i - 1)\Ik]\) for \(l \in \StripeIndexSet{i}\), and \(\IColp{i}{j}[l] = 0\) otherwise.
As an example, \cref{fig:vector-notation-ex} shows the values of the defined terms for the single parity-check code from \cref{fig:combine_ex} with \(\In = 3, \Ik = 2, \Fn = 5, \Fk = 4\).

\begin{figure}
    \centering
    \includegraphics[width=.8\textwidth]{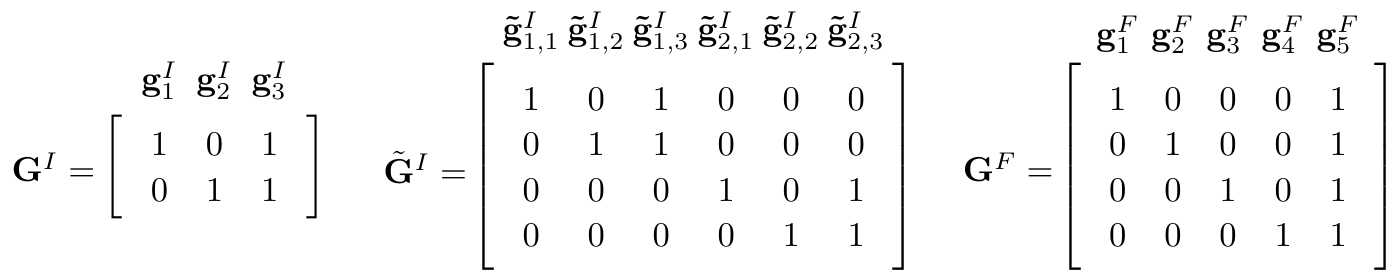}
    \caption{Generator matrices for a specific \ParamCode{3}{2}{5}{4}: \(\IGen\) is the generator matrix of the initial code; \(\IGenp\) is the generator matrix of all initial stripes; \(\FGen\) is the generator matrix of the final code.}
    \label{fig:vector-notation-ex}
\end{figure}

At times, focus will be only on the coordinates of an \encvec\ of a certain initial stripe \(i\).
For this purpose, define \(\Proj_{\StripeIndexSet{i}}(\V{v}) \in \field{q}^{\Ik}\) to be the projection of \(\V{v} \in \field{q}^{\Fk}\) to the coordinates in an index set \(\StripeIndexSet{i}\), and for a set \(\Vectors\) of vectors, \(\Proj_{\StripeIndexSet{i}}(\Vectors) = \{\Proj_{\StripeIndexSet{i}}(\V{v}) \mid \V{v} \in \Vectors\}\).
For example, \(\Proj_{\StripeIndexSet{i}}(\IColp{i}{j}) = \ICol{j}\) for all \(i \in [\Cs]\) and \(j \in [\In]\).

The following sets of vectors are defined: the \encvecs\ from initial stripe \(i\), \(\StripeICols{i} = \{\IColp{i}{j} \mid j \in [\In]\}\), all the \encvecs\ from all the initial stripes, \(\AllICols = \cup_{i \in [\Cs]} \StripeICols{i}\), and all the \encvecs\ from the final stripe \(\AllFCols = \{\FCol{j} \mid j \in [\Fn]\}\).

We use the term \emph{unchanged \blocks} to refer to \blocks\ from the initial stripes that remain as is (that is, unchanged) in the final stripe.
The \blocks\ in the final stripe that were not present in the initial stripes are called \emph{new}, and the \blocks\ from the initial stripes that do not carry over to the final stripe are called \emph{retired}.
For example, in \cref{fig:combine_ex}, all the data \blocks\ are unchanged \blocks\ (unshaded boxes), the single parity \block\ of the final stripe is a new \block, and the two parity blocks from the initial stripes are retired \blocks.
Each unchanged \block\ corresponds to a pair of identical initial and final \encvecs, that is, a tuple of indices \((i, j, l)\) such that \(\IColp{i}{j} = \FCol{l}\).
For instance, the example in \cref{fig:combine_ex} has four unchanged \blocks, corresponding to the identical \encvecs\ \(\IColp{i}{j} = \FCol{2(i - 1) + j}\) for \(i, j \in [2]\).
The final \encvecs\ \(\AllFCols\) can thus be partitioned into the following sets: \emph{unchanged \encvecs} from initial stripe \(i\), \(\UnchangedStripeCols{i} = \AllFCols \cap \StripeICols{i}\) for all \(i \in [\Cs]\), and \emph{new \encvecs} \(\NewCols = \AllFCols \setminus \AllICols\).

From the point of view of \conversion\ cost, unchanged \blocks\ are ideal, because they require no extra work.
On the other hand, constructing new \blocks\ require accessing \blocks\ from the initial stripes.
When a \block\ from the initial stripes is accessed, all of its contents are downloaded to a central location, where they are available for the construction of all new \blocks.
For example, in \cref{fig:combine_ex}, one \block\ from each initial stripe is accessed during \conversion.

During \conversion, new \blocks\ are constructed by reading \blocks\ from the initial stripes.
That is, every new \encvec\ is simply a linear combination of a specific subset of \(\AllICols\).
Define the \emph{read access set} for an MDS \RegimeCodeDefault\ as the set of tuples \(\Down \in [\Cs] \times [\In]\) such that the set of new \encvecs \(\NewCols\) is contained in the span of the set \(\{\IColp{i}{j} \mid (i, j) \in \Down\}\).
Furthermore, define the index sets \(\Down_i = \{j \mid (i, j) \in \Down\}\), \(\forall i \in [\Cs]\) which denote the \encvecs\ accessed from each initial stripe.

\subsection{Lower bounds on the \cost\ of \codeconversion} \label{sec:lower-bounds}

In this subsection, we present lower bounds on the \cost\ of linear \mdscodenames\ in the \regime.
This is done in four steps:
\begin{enumerate}
    \item We show that in the \regime, all possible pairs of partitions \(\IPar\) and \(\FPar\) partitions are equivalent up to relabeling, and hence do not need to be specified.
    \item An upper bound on the maximum number of unchanged \blocks\ is proved. We call \codenames\ that meet this bound as ``\emph{stable}''.
    \item Lower bounds on the \cost\ of linear \mdscodenames\ are proved, under the added restriction that the \codenames\ are stable.
    \item The stability restriction is removed, by showing that non-stable linear \mdscodenames\ necessarily incur higher \cost, and hence it suffices to consider only stable \mdscodenames.
\end{enumerate}

We now start with the first step. 
In the general regime, partition functions need to be specified since they indicate how message symbols from the initial stripes are mapped into the final stripes.
In the \regime, however, there is only one final stripe, and hence the choice of the partition functions does not matter. 

\begin{prop}
    For every \RegimeCodeDefault, all possible pairs of initial and final partitions \((\IPart, \FPart)\) are equivalent up to relabeling.
\end{prop}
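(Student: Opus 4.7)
The plan is to exploit the fact that in the merge regime we have $\Fk = \Cs\Ik$, so $\NNodes = \Lcm(\Ik, \Fk) = \Fk$. This immediately pins down the final partition: since $\FPart$ must consist of $\NNodes/\Fk = 1$ disjoint subset of size $\Fk = \NNodes$, the only possibility is $\FPart = \{[\NNodes]\}$. Thus there is no choice whatsoever in the final partition; the only freedom is in the initial partition $\IPart$, which consists of $\NNodes/\Ik = \Cs$ disjoint subsets of $[\NNodes]$ each of size $\Ik$.

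Next I would formalize the notion of ``relabeling.'' Two partition pairs $(\IPart, \FPart)$ and $(\IPart', \FPart')$ are equivalent up to relabeling if there is a permutation $\pi$ of $[\NNodes]$ such that $\pi(\IPart) = \IPart'$ and $\pi(\FPart) = \FPart'$ (where $\pi$ is applied elementwise to each subset in the partition). Given any two candidate initial partitions $\IPart = \{S_1, \ldots, S_\Cs\}$ and $\IPart' = \{S_1', \ldots, S_\Cs'\}$, both are partitions of $[\NNodes]$ into $\Cs$ blocks of equal size $\Ik$, so a straightforward combinatorial argument yields a permutation $\pi$ of $[\NNodes]$ with $\pi(S_i) = S_i'$ for all $i \in [\Cs]$.

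It then remains to check that this same $\pi$ leaves the final partition invariant: $\pi(\FPart) = \pi(\{[\NNodes]\}) = \{[\NNodes]\} = \FPart'$, which holds trivially because $\pi$ is a permutation of $[\NNodes]$ and $\FPart = \FPart' = \{[\NNodes]\}$. Hence $(\IPart, \FPart)$ and $(\IPart', \FPart')$ are related by the relabeling $\pi$, proving the proposition.

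There is no real obstacle here; the argument is essentially bookkeeping, and the key observation is simply that the merge regime forces $\FPart$ to be the trivial one-block partition, after which any reshuffling of the initial blocks can be realized by a permutation of the message coordinates. The main thing to be careful about is making the definition of ``equivalent up to relabeling'' explicit, since the proposition's content lies entirely in that definition — once stated, the proof is immediate.
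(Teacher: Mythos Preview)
Your proposal is correct and follows essentially the same approach as the paper: both arguments observe that $\NNodes = \Lcm(\Ik,\Cs\Ik) = \Cs\Ik$ forces $\FPart = \{[\NNodes]\}$, after which any two initial partitions into $\Cs$ equal-size blocks are related by a permutation of $[\NNodes]$. Your version is simply more explicit, spelling out the definition of ``equivalent up to relabeling'' and verifying that the permutation also fixes the (unique) final partition, whereas the paper leaves these as a one-line remark.
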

\begin{proof}
    Given that \(M = \Lcm(\Ik, \Cs\Ik) = \Cs\Ik\), there is only one possible final partition \(\FPart = \{[\Cs\Ik]\}\).
    Thus, regardless of \(\IPart\), all data in the initial stripes will get mapped to the same final stripe.
    By relabeling blocks, any two initial partitions can be made equivalent.
\end{proof}

Thus, the analysis of \codenames\ in the \regime\ in this regime can be simplified by noting that the choice of partitions \(\IPar\) and \(\FPar\) is inconsequential.

Since one of the terms in \cost\ is the number of new \blocks, a natural way to reduce \cost\ is to maximize the number of unchanged \blocks.
However, there is a limit on the number of \blocks\ that can remain unchanged.
\begin{prop}\label{thm:max-unchanged}
    In an MDS \RegimeCodeDefault, there can be at most \(\Ik\) unchanged vectors from each initial stripe.
    Thus, there can be at most \(\Cs\Ik\) unchanged vectors in total, or in other words, there will be at least \(\Fr\) new vectors.
\end{prop}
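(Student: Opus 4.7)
The plan is to use the MDS property of the final code $\FCode$ together with the observation that an unchanged encoding vector $\FCol{l} = \IColp{i}{j}$ must, by the embedded representation defined in \cref{sec:notation}, have all its nonzero entries confined to the coordinate set $\StripeIndexSet{i}$, which has size $\Ik$.

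First, I would fix an initial stripe index $i \in [\Cs]$ and consider the set of unchanged encoding vectors coming from that stripe, namely $\UnchangedStripeCols{i} = \AllFCols \cap \StripeICols{i}$. Each element of $\UnchangedStripeCols{i}$ is, on one hand, a column of $\FGen$, and on the other hand lies in the $\Ik$-dimensional coordinate subspace $V_i = \{\V{v} \in \field{q}^{\Cs\Ik} : \V{v}[l] = 0 \text{ for } l \notin \StripeIndexSet{i}\}$. Consequently, any collection of $\Ik + 1$ vectors in $V_i$ must be linearly dependent.

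Next, I would invoke the MDS property of $\FCode$: since $\Fk = \Cs\Ik$, any $\Fk$ columns of $\FGen$ are linearly independent, and in particular any set of at most $\Fk$ columns is linearly independent. Because $\Cs \geq 2$, we have $\Ik + 1 \leq \Cs\Ik = \Fk$, so if $|\UnchangedStripeCols{i}| \geq \Ik + 1$ then we would obtain a linearly dependent subset of at most $\Fk$ columns of $\FGen$, contradicting the MDS property. Hence $|\UnchangedStripeCols{i}| \leq \Ik$ for every $i \in [\Cs]$.

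Finally, I would observe that distinct initial stripes $i \neq i'$ yield disjoint sets of unchanged vectors, since a nonzero vector in $V_i$ cannot also lie in $V_{i'}$ (their supports do not overlap), and MDS columns are nonzero. Summing the per-stripe bound over all $\Cs$ initial stripes gives at most $\Cs\Ik = \Fk$ unchanged vectors in total, so at least $\Fn - \Fk = \Fr$ of the final encoding vectors in $\AllFCols$ must be new. The main subtlety here is simply verifying that the per-stripe bounds can be added without double-counting and that $\Ik + 1 \leq \Fk$, both of which follow cleanly from $\Cs \geq 2$; everything else is a direct application of the MDS linear-independence property.
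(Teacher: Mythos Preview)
Your proposal is correct and follows essentially the same approach as the paper: observe that any \(\Ik + 1\) encoding vectors from \(\StripeICols{i}\) are linearly dependent (since they all lie in the \(\Ik\)-dimensional coordinate subspace supported on \(\StripeIndexSet{i}\)), which would contradict the MDS property of \(\FCode\) if they were all columns of \(\FGen\). You add a bit more detail than the paper does---explicitly checking \(\Ik + 1 \leq \Fk\) via \(\Cs \geq 2\) and verifying disjointness of the \(\UnchangedStripeCols{i}\) before summing---but the core argument is identical.
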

\begin{proof}
    Every subset \(\Vectors \subseteq \StripeICols{i}\) of size at least \(\Ik + 1\) is linearly dependent, and thus if \(\Vectors \subseteq \AllFCols\) then \(\FCode\) cannot be MDS.
    Hence, for each stripe \(i \in [\Cs]\), the amount of unchanged vectors \(|\UnchangedStripeCols{i}|\) is at most \(\Ik\).
\end{proof}
Since new \blocks\ are constructed using only the contents of \blocks\ read, it is clear that both the quantities that make up \cost\ are going to be related.
Intuitively, more new \blocks\ means that more \blocks\ need to be read, resulting in higher \cost.
With this intuition in mind, we will first focus on the case where the number of new \blocks\ is the minimum: \(|\NewCols| = \Fn - \Cs\Ik = \Fn - \Fk = \Fr\).
We refer to such codes as \textit{stable} \codenames.

\begin{definition}[Stability]\label{def:stable}
    An MDS \RegimeCodeDefault\ is \emph{stable} if and only if it has exactly \(\Cs\Ik\) unchanged \blocks, or in other words, exactly \(\Fr\) new blocks.
\end{definition}
We first prove lower bounds on the \cost of stable linear MDS \codenames, and then show that the \cost of \conversion\ in MDS codes without this stability property can only be higher.

A natural question now is characterizing the minimum size of the read access set for \conversion \(\Down\) for MDS codes.
Clearly, accessing \(\Ik\) \blocks\ from each initial stripe will always suffice, since this is sufficient to decode all the original data.
Thus, in a minimum size \(\Down\) we can upper bound the size of each \(\Down_i\) by \(|\Down_i| \leq \Ik,\; i \in [\Cs]\).

The first lower bound on the size of \(\Down_i\) will be given by the interaction between \(\Fr\) and the MDS property.
\begin{lemma}\label{thm:down-size:one}
    For all linear stable MDS \RegimeCodesDefault, the read access set \(\Down_i\) from each initial stripe \(i \in [\Cs]\) satisfies \(|\Down_i| \geq \min\{\Ik, \Fr\}\).
\end{lemma}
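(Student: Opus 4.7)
The plan is to pass to the projection onto the coordinates $\StripeIndexSet{i}$ of stripe $i$, and then combine the stability property with the MDS property of $\FCode$ to force the projected new \encvecs\ to have high rank.

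First I would observe that, by stability combined with \cref{thm:max-unchanged}, each stripe must contribute \emph{exactly} $\Ik$ unchanged vectors: the total number of unchanged vectors is $\Fk = \Cs\Ik$, but each $|\UnchangedStripeCols{i'}| \leq \Ik$, so equality holds for every $i'$. Moreover, since $\ICode$ is MDS, the $\Ik$ vectors in $\Proj_{\StripeIndexSet{i'}}(\UnchangedStripeCols{i'})$ are linearly independent in $\field{q}^{\Ik}$, and hence the $(\Cs-1)\Ik$ unchanged vectors from stripes $i' \neq i$ together span the full coordinate subspace $\field{q}^{[\Cs\Ik]\setminus\StripeIndexSet{i}}$.

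Next, set $t = \min\{\Ik, \Fr\}$ and consider the following $\Fk = \Cs\Ik$ columns of $\FGen$: all $(\Cs-1)\Ik$ unchanged vectors from stripes $i' \neq i$, any $\Ik - t$ of the unchanged vectors from stripe $i$, and any $t$ of the $\Fr$ new vectors. Since $\FCode$ is MDS, these $\Fk$ columns must be linearly independent, i.e.\ span $\field{q}^{\Fk}$. Because the first $(\Cs-1)\Ik$ columns already span $\field{q}^{[\Cs\Ik]\setminus\StripeIndexSet{i}}$, the remaining $\Ik$ columns, projected onto $\StripeIndexSet{i}$, must be linearly independent in $\field{q}^{\Ik}$. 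In particular, the $t$ chosen projected new vectors must be linearly independent modulo the $\Ik - t$ projected unchanged vectors, which forces the $t$ projected new vectors themselves to be linearly independent in $\field{q}^{\Ik}$.

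Since the choice of $t$-subset of new vectors was arbitrary, this shows $\Rank\bigl(\Proj_{\StripeIndexSet{i}}(\NewCols)\bigr) \geq t = \min\{\Ik,\Fr\}$. On the other hand, by definition of the read access set, $\NewCols \subseteq \mathrm{span}\{\IColp{i'}{j} : (i',j) \in \Down\}$. Projecting onto $\StripeIndexSet{i}$ kills every $\IColp{i'}{j}$ with $i' \neq i$ and sends $\IColp{i}{j}$ to $\ICol{j}$, so $\Proj_{\StripeIndexSet{i}}(\NewCols) \subseteq \mathrm{span}\{\ICol{j} : j \in \Down_i\}$, a subspace of dimension at most $|\Down_i|$. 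Combining the two bounds yields $|\Down_i| \geq \min\{\Ik,\Fr\}$.

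The main obstacle is step two: picking the right $\Fk$-subset of final \encvecs\ so that the MDS condition translates cleanly into a rank condition on the projected new vectors. The trick is to use the unchanged vectors from stripes $i' \neq i$ to ``absorb'' the coordinates outside $\StripeIndexSet{i}$, so that the MDS constraint on the full codeword collapses to a full-rank constraint purely inside $\field{q}^{\StripeIndexSet{i}}$; once this reduction is set up, the rest is bookkeeping.
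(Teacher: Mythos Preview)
Your proposal is correct and follows essentially the same approach as the paper: choose a subset of final \encvecs\ containing the unchanged vectors from stripes $i' \neq i$ together with $t = \min\{\Ik,\Fr\}$ new vectors, invoke the MDS property for linear independence, and project onto $\StripeIndexSet{i}$ to conclude that the projected new vectors have rank $t$, hence $|\Down_i| \geq t$. The only cosmetic difference is that you pad the subset with $\Ik - t$ unchanged vectors from stripe $i$ to hit size exactly $\Fk$, whereas the paper simply uses a subset of size $(\Cs-1)\Ik + t \leq \Fk$; since the extra vectors you add are immediately quotiented out in the projection step, this addition is harmless but unnecessary.
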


\begin{proof}
    By the MDS property, every subset \(\Vectors \in \AllFCols\) of size at most \(\Fk = \Cs\Ik\) is linearly independent.
    For any initial stripe \(i \in [\Cs]\), consider the set of all unchanged \encvecs\ from other stripes, \(\cup_{\ell \neq i} \StripeICols{\ell}\), and pick any subset of new \encvecs\ \(\Vectorss \subseteq \NewCols\) of size \(|\Vectorss| = \min\{\Ik, \Fr\}\).
    Consider the subset \(\Vectors = (\cup_{\ell \neq i} \StripeICols{\ell} \cup \Vectorss)\): it is true that \(\Vectors \subseteq \AllFCols\) and \(|\Vectors| = (\Cs - 1)\Ik + \min\{\Ik, \Fr\} \leq \Fk\).
    Therefore, all the \encvecs\ in \(\Vectors\) are linearly independent.
    
    Notice that the \encvecs\ in \(\Vectors \setminus \Vectorss\) contain no information about initial stripe \(i\) and complete information about every other initial stripe \(\ell \neq i\).
    Therefore, the information about initial stripe \(i\) in each \encvec\ in \(\Vectorss\) has to be linearly independent since, otherwise, \(\Vectors\) could not be linearly independent.
    Formally, it must be the case that \(\Vectorss_i = \Proj_{\StripeIndexSet{i}}(\Vectorss)\) has rank equal to \(\min\{\Ik, \Fr\}\) (recall from \cref{sec:notation} that \(\StripeIndexSet{i}\) is the set of coordinates belonging to initial stripe \(i\)).
    However, by definition, the subset \(\Vectorss_i\) must be contained in the span of \(\{\ICol{j} \mid j \in \Down_i\}\).
    Therefore, the rank of \(\{\ICol{j} \mid j \in \Down_i\}\) is at least that of \(\Vectorss_i\), which implies that \(|\Down_i| \geq \min\{\Ik, \Fr\}\).
\end{proof}
Therefore, in general we need to access at least \(\Fr\) vectors from each initial stripe, unless \(\Fr \geq \Ik\), in which case we need to access \(\Ik\) \encvecs, that is, the full data.

We next show that, in a linear MDS stable \codename\ in the \regime, when the number of new \blocks\ \(\Fr\) is bigger than \(\Ir\), at least \(\Ik\) \blocks\ need to be accessed from each initial stripe.
The intuition behind this result is the following: in an MDS stable \codename\ in the \regime, when the number of new \blocks\ \(\Fr\) is bigger than \(\Ir\), during a \conversion one is forced to read more than \(\Ir\) \blocks.
Hence there must exist \blocks\ from the initial stripes that are both unchanged and are read during \conversion.
Since the unchanged blocks that are read are also present in the final stripe, the information read from these \blocks\ is not useful in creating a new \block\ that retains the MDS property for the final code  unless \(\Ik\) \blocks (that is, full data) are read.

\begin{lemma}\label{thm:down-size:two}
    For all linear stable MDS \RegimeCodesDefault, if \(\Ir < \Fr\) then the read access set \(\Down_i\) from each initial stripe \(i \in [\Cs]\) satisfies \(|\Down_i| \geq \Ik\).
\end{lemma}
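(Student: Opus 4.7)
The plan is to assume for contradiction that $d = |\Down_i| < \Ik$ for some stripe $i$ and derive a dimensional contradiction. By \cref{thm:down-size:one} we may restrict to the nontrivial case $\Fr < \Ik$ (otherwise $d \geq \min\{\Ik,\Fr\} = \Ik$ already). Let $u$ denote the number of indices in $\Down_i$ corresponding to unchanged blocks of stripe $i$, and $r = d - u$ the number corresponding to retired blocks; since stripe $i$ has only $\Ir$ retired positions, $r \leq \Ir$, and the hypothesis $d \geq \Fr > \Ir$ gives $u \geq d - \Ir > 0$. Let $P_i = \Proj_{\StripeIndexSet{i}}(\NewCols)$ and $W = \mathrm{span}\{\ICol{j} : j \in \Down_i\}$; by MDS of $\ICode$ we have $\dim W = d$, and by definition of the read access set $P_i \subseteq W$.

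The first key step is to extract a strong structural property on $P_i$ from the MDS property of $\FCode$. Arguing as in \cref{thm:down-size:one}: for every $t \in \{0,\ldots,\Fr\}$, any $\Fk$-subset of columns of $\FGen$ consisting of $\Ik - t$ unchanged encoding vectors of stripe $i$, all $(\Cs-1)\Ik$ unchanged vectors of the other stripes, and $t$ new vectors is linearly independent. Quotienting out by the other-stripe vectors (which vanish on $\StripeIndexSet{i}$), this precisely says that the $\Ik + \Fr$ vectors $B = \{\ICol{j} : j \text{ unchanged in stripe } i\} \cup P_i$ form the column set of a $[\Ik + \Fr,\Ik]$ MDS generator matrix over $\field{q}$. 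Choosing coordinates in $\field{q}^{\Ik}$ so that the $\Ik$ unchanged stripe-$i$ vectors become the standard basis, the remaining $\Fr$ columns of $B$ form an $\Ik \times \Fr$ super-regular matrix (every square submatrix invertible); a direct argument then shows that the $\Fr$-dimensional subspace $\mathrm{span}(P_i)$ has minimum Hamming weight exactly $\Ik - \Fr + 1$.

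The proof concludes by a case split on $u$, using that in the above coordinates $W$ contains the coordinate subspace $C_u$ of dimension $u$ supported on the unchanged indices of $\Down_i$. \textbf{Case A:} $u \leq \Ik - \Fr$. Every nonzero vector of $C_u$ has weight at most $u < \Ik - \Fr + 1$, so the minimum-weight bound forces $\mathrm{span}(P_i) \cap C_u = \{0\}$; dimension counting inside $W$ then yields $\Fr + u \leq d$, i.e., $u \leq d - \Fr$, which combined with $u \geq d - \Ir$ gives $\Fr \leq \Ir$, contradicting $\Ir < \Fr$. \textbf{Case B:} $u \geq \Ik - \Fr + 1$. The dimension formula gives $\dim(\mathrm{span}(P_i) \cap C_u) \geq \Fr + u - d$, while this intersection, viewed as a subspace of $\field{q}^{u}$, inherits minimum Hamming weight $\geq \Ik - \Fr + 1$ and hence by the Singleton bound has dimension at most $u - \Ik + \Fr$; combining yields $d \geq \Ik$, again a contradiction.

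The main obstacle I anticipate is cleanly establishing the minimum-weight characterization of $\mathrm{span}(P_i)$ in the second paragraph, i.e., translating the MDS property of $\FCode$ into an MDS condition on a reduced $[\Ik+\Fr,\Ik]$ ``stripe-$i$ view.'' Once this characterization is in place, the case analysis is short dimension counting, and the hypothesis $\Ir < \Fr$ enters precisely to eliminate Case A (Case B is ruled out unconditionally whenever $d < \Ik$).
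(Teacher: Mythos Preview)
Your proposal is correct. The reduction to a $[\Ik+\Fr,\Ik]$ MDS ``stripe-$i$ view'' is sound: taking any $\Ik-t$ unchanged stripe-$i$ vectors together with $t$ new vectors and the $(\Cs-1)\Ik$ unchanged vectors from the other stripes gives a size-$\Fk$ subset of $\AllFCols$, and since the other-stripe unchanged vectors span exactly the kernel of $\Proj_{\StripeIndexSet{i}}$ (each stripe contributes $\Ik$ independent columns of the MDS code $\ICode$), the projected $\Ik$ vectors are independent. The minimum-weight claim then follows immediately from superregularity of the resulting $\Ik\times\Fr$ matrix (any $\Fr$ rows are invertible, so a nonzero column-combination has at most $\Fr-1$ zeros); you only use the inequality $\geq \Ik-\Fr+1$, so ``exactly'' is harmless overreach. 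Both cases then go through as you describe.

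The paper's own proof reaches the same case split on $u = |\Vectorss_1|$ but by a more direct counting argument: it simply selects a size-$\Fk$ subset of $\AllFCols$ containing $\Vectorss_1$ together with $\min\{\Ik-u,\Fr\}$ new vectors, observes that the stripe-$i$ projections of these $|\Vectorss_1|+|\Vectorss_2|$ vectors must be independent and lie in $W$, and reads off $|\Down_i|\geq u + \min\{\Ik-u,\Fr\}$. Your route instead extracts the full MDS structure of the stripe-$i$ view and then applies coding-theoretic tools (minimum distance, Singleton bound) to control $\dim(\mathrm{span}(P_i)\cap C_u)$. This is a genuinely different and somewhat more structural argument: it yields the same bound with a bit more machinery, but the intermediate $[\Ik+\Fr,\Ik]$ MDS characterization you isolate is a nice byproduct that makes transparent \emph{why} accessing unchanged blocks is ``wasted'' unless full data is read. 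The paper's argument is shorter and entirely elementary; yours is more conceptual. The obstacle you flagged (establishing the stripe-$i$ MDS view) is in fact the easy part---once you note that the other-stripe unchanged vectors span the kernel of the projection, it is immediate.
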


\begin{proof}
    When \(\Fr \geq \Ik\), this lemma is equivalent to \cref{thm:down-size:one}, so assume \(\Ir < \Fr < \Ik\).
    From the proof of \cref{thm:down-size:one}, for every initial stripe \(i \in [\Cs]\) it holds that \(|\Down_i| \geq \Fr\).
    Since \(\Fr > \Ir\), this implies that \(\Down_i\) must contain at least one index of an unchanged \encvec.
    
    Choose a subset of at most \(\Fk = \Cs\Ik\) \encvecs\ from \(\AllFCols\), which must be linearly independent by the MDS property.
    In this subset, include all the unchanged \encvecs\ from the other initial stripes, \(\cup_{l \neq i} \StripeICols{l}\).
    Then, choose all the unchanged \encvecs\ from initial stripe \(i\) that are accessed during \conversion, \(\Vectorss_1 = (\{\IColp{i}{j} \mid j \in \Down_i\} \cap \UnchangedStripeCols{i})\).
    For the remaining vectors (if any), choose an arbitrary subset of new \encvecs, \(\Vectorss_2 \subseteq \NewCols\), such that:
    \begin{equation}\label{eq:set-size}
        |\Vectorss_2| = \min\{\Ik - |\Vectorss_1|, \Fr\}.
    \end{equation}
    It is easy to check that the subset \(\Vectors = \cup_{l \neq i} \StripeICols{l} \cup \Vectorss_1 \cup \Vectorss_2\) is of size at most \(\Fk = \Cs\Ik\), and therefore it is linearly independent.
    This choice of \(\Vectors\) follows from the idea that the information contributed by \(\Vectorss_1\) to the new \encvecs\ is already present in the unchanged \encvecs, which will be at odds with the linear independence of \(\Vectors\).
    
    Since the elements of \(\Vectorss_1\) and \(\Vectorss_2\) are the only \encvecs\ in \(\Vectors\) that contain information from initial stripe \(i\), it must be the case that \(\widetilde{\Vectorss} = \Proj_{\StripeIndexSet{i}}(\Vectorss_1) \cup \Proj_{\StripeIndexSet{i}}(\Vectorss_2)\) has rank \(|\Vectorss_1| + |\Vectorss_2|\).
    Moreover, \(\widetilde{\Vectorss}\) is contained in the span of \(\{\ICol{j} \mid j \in \Down_i\}\) by definition, so it holds that: 
    \begin{equation}\label{eq:down-lower}
      |\Down_i| \geq |\Vectorss_1| + |\Vectorss_2|.
    \end{equation}
    
    From \cref{eq:set-size}, there are two cases:
    
    \noindent\textbf{Case 1:} \(\Ik - |\Vectorss_1| \leq \Fr\).
    Then \(|\Vectorss_2| = \Ik - |\Vectorss_1|\) and by \cref{eq:down-lower} it holds that \(|\Down_i| \geq |\Vectorss_1| + |\Vectorss_2| = \Ik\).
    
    \noindent\textbf{Case 2:} \(\Ik - |\Vectorss_1| > \Fr\).
    Then \(|\Vectorss_2| = \Fr\) and by \cref{eq:down-lower} it holds that:
    \begin{equation}\label{eq:first-ineq}
        |\Down_i| \geq |\Vectorss_1| + \Fr.
    \end{equation}
    Notice that there are only \(\Ir\) retired (i.e.\ not unchanged) \encvecs\ in stripe \(i\).
    Since every accessed \encvec\ is either in \(\Vectorss_1\) or is a retired \encvec, it holds that:
    \begin{equation}\label{eq:second-ineq}
        |\Down_i| \leq |\Vectorss_1| + \Ir. 
    \end{equation}
    By combining \cref{eq:first-ineq} and \cref{eq:second-ineq}, we arrive at the contradiction \(\Fr \leq \Ir\), which occurs because there are not enough retired \blocks\ in the initial stripe \(i\) to ensure that the final code has the MDS property.
    Therefore, case 1 always holds, and \(|\Down_i| \geq k\).
\end{proof}

Combining the above results leads to the following theorem on the lower bound of read access set size of linear stable MDS convertible codes.
\begin{theorem}\label{thm:min-down-size}
    Let \(\MinDownSizeDefault\) denote the minimum integer \(\DownSize\) such that there exists a linear stable MDS \RegimeCodeDefault\ with read access set \(\Down\) of size \(|\Down| = d\).
    For all valid parameters, \(\MinDownSizeDefault \geq \Cs \min\{\Ik, \Fr\}\).
    Furthermore, if \(\Ir < \Fr\), then \(\MinDownSizeDefault \geq \Cs\Ik\).
\end{theorem}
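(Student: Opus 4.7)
The plan is to simply aggregate the per-stripe lower bounds from \cref{thm:down-size:one} and \cref{thm:down-size:two} over all \(\Cs\) initial stripes. By construction in \cref{sec:notation}, the read access set \(\Down \subseteq [\Cs]\times[\In]\) decomposes as the disjoint union of the stripe-indexed subsets \(\Down_1,\ldots,\Down_\Cs\), since any index \((i,j)\) belongs to exactly one stripe \(i\). Hence \(|\Down| = \sum_{i \in [\Cs]} |\Down_i|\), and any per-stripe lower bound on \(|\Down_i|\) can be summed directly to obtain a lower bound on \(|\Down|\).

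First I would handle the general case. Given any linear stable MDS \RegimeCodeDefault\ achieving the minimum value \(\MinDownSizeDefault\) with access set \(\Down\), \cref{thm:down-size:one} gives \(|\Down_i| \geq \min\{\Ik, \Fr\}\) for every \(i \in [\Cs]\). Summing over \(i\) yields \(|\Down| \geq \Cs\min\{\Ik,\Fr\}\), and since this holds for every such code, \(\MinDownSizeDefault \geq \Cs\min\{\Ik,\Fr\}\).

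Next I would handle the refined case \(\Ir < \Fr\). Under this additional hypothesis, \cref{thm:down-size:two} strengthens the per-stripe bound to \(|\Down_i| \geq \Ik\) for every \(i \in [\Cs]\). Again summing over the \(\Cs\) stripes gives \(|\Down| \geq \Cs\Ik\), and hence \(\MinDownSizeDefault \geq \Cs\Ik\) in this regime.

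There is no real obstacle here: all the content of the theorem has already been pushed into the two preceding lemmas, and what remains is just the observation that the stripe-indexed access sets \(\{\Down_i\}_{i\in[\Cs]}\) are pairwise disjoint (owing to the disjoint embedding of the initial stripes into coordinates \(\StripeIndexSet{1},\ldots,\StripeIndexSet{\Cs}\)), so that their individual lower bounds add. The only thing to be slightly careful about is to state that the bound holds for \emph{every} valid linear stable MDS \codename, so that taking the minimum over all such codes preserves the inequality.
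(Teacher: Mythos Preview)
Your proposal is correct and matches the paper's own proof, which simply states that the result follows directly from \cref{thm:down-size:one} and \cref{thm:down-size:two}. Your write-up just makes explicit the summation \(|\Down| = \sum_{i\in[\Cs]}|\Down_i|\) that underlies this; one tiny notational quibble is that the \(\Down_i\) live in \([\In]\) rather than in \([\Cs]\times[\In]\), so strictly speaking it is the sets \(\{i\}\times\Down_i\) that partition \(\Down\), but the cardinality identity you use is of course correct.
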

\begin{proof}
    Follows directly from \cref{thm:down-size:one} and \cref{thm:down-size:two}.
\end{proof}

So far we have focused on deriving lower bounds on the \cost of \conversion\ for \textit{stable} MDS \codenames, which have the maximum number of unchanged \blocks.
That is, \codenames that have \(\Cs\Ik\) unchanged \blocks\ and \(\Fr\) new \blocks.
We next show that this lower bound generally applies even for non-stable \codenames by proving that increasing the number of new \blocks\ from the minimum possible does not decrease the lower bound on the size of the read access set \(\Down\).

\begin{lemma} \label{thm:min-down-size-non-stable}
    The lower bounds on the size of the read access set from  \Cref{thm:min-down-size} hold for all (including non-stable) linear MDS \RegimeCodesDefault.
\end{lemma}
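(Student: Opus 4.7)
The plan is to extend the per-stripe lower bounds from \cref{thm:down-size:one,thm:down-size:two}---on which \cref{thm:min-down-size} rests---to arbitrary linear MDS \RegimeCodesDefault and then sum over $i \in [\Cs]$ to obtain the lemma. Write $u_\ell := |\UnchangedStripeCols{\ell}|$ and $U_{-i} := \sum_{\ell\neq i} u_\ell$; in the non-stable case we only have $u_\ell \leq \Ik$ and $U_{-i} \leq (\Cs-1)\Ik$ (rather than equality), but this is compensated by $|\NewCols| = \Fn - \sum_\ell u_\ell \geq \Fr$.

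For the bound $|\Down_i|\geq\min\{\Ik,\Fr\}$, I would reuse the witness $\Vectors = \cup_{\ell\neq i}\UnchangedStripeCols{\ell} \cup \Vectorss$ from \cref{thm:down-size:one} with $\Vectorss \subseteq \NewCols$ of size $\min\{\Ik,\Fr\}$ (feasible since $|\NewCols|\geq\Fr$). Because $U_{-i} \leq (\Cs-1)\Ik$, the set $\Vectors \subseteq \AllFCols$ still has size at most $\Fk$ and is therefore linearly independent by the MDS property of $\FCode$; the projection-to-$\StripeIndexSet{i}$ argument then goes through verbatim, yielding $|\Down_i|\geq\min\{\Ik,\Fr\}$.

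For the bound $|\Down_i|\geq \Ik$ when $\Ir<\Fr$, I would mimic \cref{thm:down-size:two}: let $a_i$ denote the number of unchanged-from-stripe-$i$ vectors that are read, and form $\Vectors = \cup_{\ell\neq i}\UnchangedStripeCols{\ell} \cup \Vectorss_1 \cup \Vectorss_2$ with $|\Vectorss_1|=a_i$ and $|\Vectorss_2| = \min\{\Fk - U_{-i} - a_i,\ |\NewCols|\}$. The MDS rank argument yields $|\Down_i|\geq a_i+|\Vectorss_2|$. If the minimum equals $\Fk-U_{-i}-a_i$, then immediately $|\Down_i|\geq \Fk - U_{-i} \geq \Ik$. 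Otherwise $|\Vectorss_2|=|\NewCols|$, and I would derive a contradiction by combining the resulting lower bound on $|\Down_i|$ with the upper bound $|\Down_i|\leq a_i + (\In-u_i)$ that comes from stripe $i$ having only $\In - u_i$ retired vectors available to be read.

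The main obstacle is precisely this last subcase. In the stable setting the contradiction comes out cleanly from the identity ``\# retired per stripe $=\Ir$,'' whereas a non-stable stripe may have $\In - u_i > \Ir$ retired vectors. The saving grace is that the $u_i$-dependence on both sides is symmetric: substituting $|\NewCols| = \Cs\Ik + \Fr - u_i - U_{-i}$ and using $U_{-i} \leq (\Cs-1)\Ik$, the lower bound becomes $|\Down_i| \geq \Ik + \Fr + (a_i - u_i)$, while the upper bound reads $|\Down_i| \leq \Ik + \Ir + (a_i - u_i)$; the $a_i - u_i$ terms cancel, and one again recovers the contradiction $\Fr \leq \Ir$.
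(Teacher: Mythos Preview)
There is a genuine gap in your first step, and it propagates into the second.  In the proofs of \cref{thm:down-size:one,thm:down-size:two}, the reason a linear dependence among the projections \(\Proj_{\StripeIndexSet{i}}(\Vectorss)\) forces a dependence inside \(\Vectors\) is that \(\bigcup_{\ell\neq i}\UnchangedStripeCols{\ell}\) consists of exactly \((\Cs-1)\Ik\) independent vectors and therefore \emph{spans} the full complementary coordinate subspace supported on \([\Fk]\setminus\StripeIndexSet{i}\).  That is precisely where stability is used.  In the non-stable case you only have \(U_{-i}\leq(\Cs-1)\Ik\), so a nontrivial combination of \(\Vectorss\)-vectors whose \(\StripeIndexSet{i}\)-projection vanishes need not lie in the span of \(\bigcup_{\ell\neq i}\UnchangedStripeCols{\ell}\).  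The projection argument with your witness of size \(|\Vectorss|=\min\{\Ik,\Fr\}\) therefore only yields
\(
|\Down_i|\;\geq\;\min\{\Ik,\Fr\}-\bigl((\Cs-1)\Ik-U_{-i}\bigr),
\)
which can be vacuous (take, e.g., \(\Cs=3\), \(\Ik=2\), \(\Fr=1\), \(u_2=u_3=0\): your \(\Vectors\) is a single new vector, and nothing prevents its \(\StripeIndexSet{1}\)-projection from being zero).  The same over-claim appears in your second step: ``the MDS rank argument yields \(|\Down_i|\geq a_i+|\Vectorss_2|\)'' is not justified when \(U_{-i}<(\Cs-1)\Ik\); the honest bound carries the deficit \((\Cs-1)\Ik-U_{-i}\).

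The paper avoids this by always taking \(|\Vectors|=\Fk\), so that \(\Vectors\) is a \emph{basis} of \(\field{q}^{\Fk}\) and \(\Proj_{\StripeIndexSet{i}}(\Vectors)\) automatically has rank \(\Ik\); one then bounds how much of that rank can come from \(\Vectors\cap\UnchangedStripeCols{i}\) versus the span of \(\{\ICol{j}:j\in\Down_i\}\).  Your Case~2 outline is in fact salvageable along your own lines: if you replace the unjustified \(|\Down_i|\geq a_i+|\Vectorss_2|\) by the correct \(|\Down_i|\geq a_i+|\Vectorss_2|-\bigl((\Cs-1)\Ik-U_{-i}\bigr)\), both subcases still collapse to \(|\Down_i|\geq\Ik\) and the contradiction \(\Fr\leq\Ir\) exactly as you wrote.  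For the first bound, however, the witness must be enlarged---either to a full \(\Fk\)-subset as the paper does, or at least to \(|\Vectorss|=\min\{\Fk-U_{-i},\,|\NewCols|\}\) new vectors, after which the corrected rank inequality gives \(|\Down_i|\geq\min\{\Ik,\Fr\}\).
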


\begin{proof}
    We show that, even for non-stable \codenames, that is, when there are more than \(\Fr\) new \blocks, the bounds on the read access set \(\Down\) from \cref{thm:min-down-size} still hold.
    
    \noindent\textbf{Case 1:} \(\Ir \geq \Fr\).
    Let \(i \in [\Cs]\) be an arbitrary initial stripe.
    We lower bound the size of \(\Down_i\) by invoking the MDS property on a subset \(\Vectors \subseteq \AllFCols\) of size \(|\Vectors| = \Cs\Ik\) that minimizes the size of the intersection \(|\Vectors \cap \UnchangedStripeCols{i}|\). 
    There are exactly \(\Fr\) \encvecs\ in \(\AllFCols \setminus \Vectors\), so the minimum size of the intersection \(|\Vectors \cap \UnchangedStripeCols{i}|\) is \(\max\{|\UnchangedStripeCols{i}| - \Fr, 0\}\).
    Clearly, the subset \(\Proj_{\StripeIndexSet{i}}(\Vectors)\) has rank \(\Ik\) due to the MDS property.
    Therefore, it holds that \(|\Down_i| + \max\{|\UnchangedStripeCols{i}| - \Fr, 0\} \geq \Ik\).
    By reordering, the following is obtained:
    \begin{equation*}\label{eq:stripe-reuse}
        |\Down_i| \geq \Ik - \max\{|\UnchangedStripeCols{i}| - \Fr, 0\} \geq \min\{\Fr, \Ik\},
    \end{equation*}
    which means that the bound on \(\Down_i\) established in \cref{thm:down-size:one} continues to hold for non-stable codes.
    
    \noindent\textbf{Case 2:} \(\Ir < \Fr\).
    Let \(i \in [\Cs]\) be an arbitrary initial stripe, let \(\Vectorss_1 = (\{\IColp{i}{j} \mid j \in \Down_i\} \cap \UnchangedStripeCols{i})\) be the unchanged \encvecs\ that are accessed during \conversion, and let \(\Vectorss_2 = \UnchangedStripeCols{i} \setminus \Vectorss_1\) be the unchanged \encvecs\ that are \emph{not} accessed during \conversion.
    Consider the subset \(\Vectors \subseteq \AllFCols\) of \(\Fk = \Cs\Ik\) \encvecs\ from the final stripe such that \(\Vectorss_1 \subseteq \Vectors\) and the size of the intersection \(\Vectorss_3 = (S \cap \Vectorss_2)\) is minimized.
    Since \(\Vectors\) may exclude at most \(\Fr\) \encvecs\ from the final stripe, it holds that:
    \begin{equation}\label{eq:set-size-two}
        |\Vectorss_3| = \max\{0, |\Vectorss_2| - \Fr\}.
    \end{equation}
    
    By the MDS property, \(\Vectors\) is a linearly independent set of \encvecs\ of size \(\Fk\), and thus, must contain all the information to recover the contents of every initial stripe, and in particular, initial stripe \(i\).
    Since all the information in \(\Vectors\) about stripe \(i\) is in either \(\Vectorss_3\) or the accessed \encvecs, it must hold that:
    \begin{equation} \label{eq:down-lower-two}
        |\Down_i| + |\Vectorss_3| \geq \Ik.
    \end{equation}
    
    From \cref{eq:set-size-two}, there are two cases:
    
    \noindent\textbf{Subcase 2.1:} \(|\Vectorss_2| - \Fr \leq 0\).
    Then \(|\Vectorss_3| = 0\), and by \cref{eq:down-lower-two} it holds that \(|\Down_i| \geq \Ik\), which matches the bound of \cref{thm:down-size:two}.
    
    \noindent\textbf{Subcase 2.2:} \(|\Vectorss_2| - \Fr > 0\).
    Then \(|\Vectorss_3| = |\Vectorss_2| - \Fr\), and by \cref{eq:down-lower-two} it holds that:
    \begin{equation}\label{eq:first-ineq-two}
        |\Down_i| + |\Vectorss_2| - \Fr \geq \Ik.
    \end{equation}
    The initial stripe \(i\) has \(\Ik + \Ir\) \blocks.
    By the principle of inclusion-exclusion we have that:
    \begin{equation}\label{eq:second-ineq-two}
        |\Down_i| + |\UnchangedStripeCols{i}| - |\Vectorss_1| \leq \Ik + \Ir.
    \end{equation}
    By using \cref{eq:first-ineq-two}, \cref{eq:second-ineq-two} and the fact that \(|\Vectorss_2| = |\UnchangedStripeCols{i}| - |\Vectorss_1|\), we conclude that \(\Ir \geq \Fr\), which is a contradiction and means that subcase 2.1 always holds in this case.
\end{proof}

The above result, along with the fact that the lower bound in \cref{thm:min-down-size} is achievable (as will be shown in \cref{sec:construction}), implies that all \optimal\ linear MDS \codenames in the \regime\ have the minimum possible number of new \blocks\ (which is \(\Fr\) as shown in \cref{thm:max-unchanged}), that is they are stable.

\begin{lemma} \label{thm:optimal-stable}
    All \optimal\ MDS \RegimeCodesDefault\ are stable.
\end{lemma}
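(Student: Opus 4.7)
The plan is to argue by comparison: show that any non-stable linear MDS \codename\ in the \regime\ incurs strictly more \cost\ than the construction (to be given in \cref{sec:construction}) that meets the lower bound of \cref{thm:min-down-size} with a stable code.

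First, I would decompose the \cost\ into its two constituent pieces: the number of new \blocks\ \(|\NewCols|\) (each of which must be written) and the number of \blocks\ read from the initial stripes \(|\Down|\). By \cref{thm:max-unchanged}, any MDS \RegimeCodeDefault\ has \(|\NewCols| \geq \Fr\), with equality if and only if the code is stable. Hence for a non-stable code, \(|\NewCols| \geq \Fr + 1\).

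Next, I would invoke \cref{thm:min-down-size-non-stable}, which established that the lower bound of \cref{thm:min-down-size} on \(|\Down|\) applies even without the stability assumption: namely, \(|\Down| \geq \Cs\min\{\Ik, \Fr\}\) in general and \(|\Down| \geq \Cs\Ik\) when \(\Ir < \Fr\). Adding the two pieces, any non-stable code has \cost\ at least \(\Fr + 1 + \Cs\min\{\Ik, \Fr\}\) (respectively \(\Fr + 1 + \Cs\Ik\) when \(\Ir < \Fr\)). On the other hand, the explicit construction of \cref{sec:construction} exhibits a stable linear MDS \RegimeCodeDefault\ that achieves the matching bound on \(|\Down|\) together with \(|\NewCols| = \Fr\), so its \cost\ is exactly \(\Fr + \Cs\min\{\Ik, \Fr\}\) (respectively \(\Fr + \Cs\Ik\)).

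Comparing the two expressions, the non-stable code is strictly worse by at least one access, so it cannot be \optimal. Therefore every \optimal\ MDS \RegimeCodeDefault\ must be stable. The only subtlety I anticipate is purely bookkeeping---making sure that the lower bound invoked from \cref{thm:min-down-size-non-stable} matches the upper bound attained by the construction in both case \(\Ir \geq \Fr\) and case \(\Ir < \Fr\)---but since both bounds come from the same per-stripe inequalities on \(|\Down_i|\), this will be routine.
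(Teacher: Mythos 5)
Your proposal is correct and follows essentially the same route as the paper: decompose the \cost\ as \(|\NewCols| + |\Down|\), use \cref{thm:min-down-size-non-stable} to show the read lower bound survives without stability, note that non-stability forces \(|\NewCols| \geq \Fr + 1\) by \cref{thm:max-unchanged}, and compare against the stable construction of \cref{sec:construction} that meets the bound with \(|\NewCols| = \Fr\). No gaps; your version just spells out the bookkeeping slightly more explicitly than the paper does.
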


\begin{proof}
    \Cref{thm:min-down-size-non-stable} shows that the lower bound on the read access set \(\Down\) for stable linear MDS \codenames continues to hold in the non-stable case.
    Furthermore, this bound is achievable by stable linear MDS \codenames\ in the \regime (as will be shown in \cref{sec:construction}).
    The number of new blocks written during \conversion\ under stable MDS \codenames is \(\Fr\).
    On the other hand, the number of new \blocks\ under a non-stable \codename\ is strictly greater than \(\Fr\).
    Thus, the overall \cost\ of a non-stable MDS \RegimeCodeDefault\ is strictly greater than the \cost\ of an \optimal\ \RegimeCodeDefault.
\end{proof}
Thus, for MDS \codenames\ in the \regime, it suffices to focus only on stable codes.
Combining all the results above, leads to the following key result.
\begin{theorem}\label{thm:min-access-cost-final}
For all linear MDS \RegimeCodesDefault, the \cost\ of \conversion\ is at least \(\Fr + \Cs\min\{\Ik,\Fr\}\).
    Furthermore, if \(\Ir < \Fr\), the \cost\ of \conversion\ is at least \(\Fr + \Cs\Ik\).
\end{theorem}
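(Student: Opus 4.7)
The plan is to observe that the access cost of conversion naturally decomposes into two contributions, each of which has already been lower-bounded in the preceding material. Specifically, the access cost equals the number of new blocks that must be written to the system plus the number of blocks that must be read from the initial stripes, which correspond, respectively, to $|\NewCols|$ and $|\Down|$ in our notation. I would therefore prove the theorem by lower-bounding each of these two quantities separately and then adding the bounds.

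For the first term, I would invoke \cref{thm:max-unchanged}, which shows that in any MDS \RegimeCodeDefault\ there are at least $\Fr$ new \encvecs; each one corresponds to a new \block\ that must be written, giving $|\NewCols| \geq \Fr$. For the second term, I would invoke \cref{thm:min-down-size-non-stable}, which extends the lower bounds of \cref{thm:min-down-size} from stable to arbitrary linear MDS \RegimeCodesDefault: it yields $|\Down| \geq \Cs \min\{\Ik, \Fr\}$ in general, and the stronger $|\Down| \geq \Cs \Ik$ whenever $\Ir < \Fr$. Summing the two bounds immediately produces the access-cost inequalities claimed in the theorem.

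The main subtlety that the argument needs to handle has in fact already been absorbed into \cref{thm:min-down-size-non-stable}: a priori one might worry that a non-stable \codename\ could trade extra writes for fewer reads and thereby undercut the stable-case bound, but that lemma rules this out by showing that the same read-access lower bound persists even when more than $\Fr$ new \blocks\ are produced. Consequently, the only thing left for the theorem itself is the straightforward additivity of the two contributions, and I do not anticipate any substantive obstacle in carrying out this final synthesis. The bulk of the technical effort resides in the already-established \cref{thm:down-size:one,thm:down-size:two,thm:min-down-size-non-stable}, and the present theorem is in effect their consolidated restatement in terms of total \cost.
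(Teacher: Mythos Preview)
Your proposal is correct and matches the paper's own proof essentially line for line: the paper simply states that the result ``follows from \cref{thm:min-down-size}, \cref{thm:min-down-size-non-stable}, and the definition of \cost,'' which is precisely your decomposition into the write contribution (at least $\Fr$ new blocks, via \cref{thm:max-unchanged}) and the read contribution (the read-access bounds of \cref{thm:min-down-size} extended to the non-stable case by \cref{thm:min-down-size-non-stable}). Your explicit mention of \cref{thm:max-unchanged} just makes visible what the paper leaves implicit in ``the definition of \cost.''
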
 
\begin{proof}
    Follows from \cref{thm:min-down-size}, \cref{thm:min-down-size-non-stable}, and the definition of \cost.
\end{proof}
In \cref{sec:construction} we show that the lower bound of  \cref{thm:min-access-cost-final} is achievable for all parameters. Thus, \Cref{thm:min-access-cost-final} implies that it is possible to perform \conversion\ of MDS \codenames\ in the \regime\ with significantly less \cost\ than the na\"ive strategy if and only if \(\Fr \leq \Ir < \Ik\).
For example, for an MDS \ParamCode{14}{10}{24}{20} the na\"ive strategy has an \cost\ of \(\Fn = 24\), while the optimal \cost\ is \((\Cs + 1)\Fr = 12\), which corresponds to savings in \cost\ of \(50\%\).

\section{Achievability: Explicit \optimal\ \codenames in the \regime}\label{sec:construction}

In this section, we present an explicit construction of \optimal\ MDS \codenames\ for all parameters in the \regime.
In \cref{sec:general-details}, we describe the construction of the generator matrices for the initial and final code.
Then, in \cref{sec:general-proof}, we prove that the code described by this construction has optimal \cost during \codeconversion.

\subsection{Explicit construction} \label{sec:general-details}
Recall that, in the \regime, \(\Fk = \Cs\Ik\), for any integer \(\Cs \geq 2\) and arbitrary \(\In\) and \(\Fn\). Also, recall that \(\Ir = \In - \Ik\) and \(\Fr = \Fn - \Fk\).
Notice that when \(\Ir < \Fr\), or \(\Ik \leq \Fr\), constructing an \optimal\ \codename\ is trivial.
In those cases, one can simply access all the \(\Fk = \Cs\Ik\) data blocks of the initial stripes, which meets the bound stated in \cref{thm:min-down-size}.
Thus, assume \(\Fr \leq \min\{\Ir, \Ik\}\).

Let \(\IGen, \FGen\) be the generator matrices of \(\ICode, \FCode\) respectively.
Our construction is systematic, that is, both \(\ICode\) and \(\FCode\) are systematic MDS codes.
Thus \(\IGen, \FGen\) are of the form \(\IGen = [\IdMat | \IPar]\) and \(\FGen = [\IdMat | \FPar]\), where \(\IPar\) is a \(\Ik \times \Ir\) matrix and \(\FPar\) is a \(\Fk \times \Fr\) matrix.
Therefore, to define the initial and final code, only \(\IPar\) and \(\FPar\) need to be specified. 
Let \(\field{q}\) be a finite field of size \(q = p^{D}\), where \(p\) is any prime and the degree \(D\) depends on the \codename\ parameters and will be specified later in this section.
Let \(\theta\) be a primitive element of \(\field{q}\).

Define entry \((i, j)\) of \(\IPar \in \field{q}^{\Ik \times \Ir}\) as \(\theta^{(i - 1)(j - 1)}\), where \((i, j)\) ranges over \([\Ik] \times [\Ir]\).
Entry \((i, j)\) of \(\FPar \in \field{q}^{\Fk \times \Fr}\) is defined in an identical fashion, as \(\theta^{(i - 1)(j - 1)}\), where \((i, j)\) ranges over \([\Fk] \times [\Ir]\).

For example, for \(\Ik = 3, \Ir = 3, \Fk = 6, \Fr = 3\), the matrices \(\IPar\) and \(\FPar\) would be:
\begin{equation*}
   \IPar =
   \begin{bmatrix}
     1 & 1 & 1 \\
     1 & \theta & \theta^2 \\
     1 & \theta^2 & \theta^4 \\
   \end{bmatrix}
   \qquad
   \FPar =
   \begin{bmatrix}
     1 & 1 & 1 \\
     1 & \theta & \theta^2 \\
     1 & \theta^2 & \theta^4 \\
     1 & \theta^3 & \theta^6 \\
     1 & \theta^4 & \theta^8 \\
     1 & \theta^5 & \theta^{10} \\
   \end{bmatrix}
\end{equation*}

Our explicit construction is \textit{stable} (recall from \cref{thm:optimal-stable} that all \optimal\ MDS \codenames\ in the \regime\ are stable), that is, it has exactly \(\Fk = \Cs\Ik\) unchanged \encvecs.
Given that our construction is also systematic it follows that these unchanged \encvecs\ correspond exactly to the systematic elements of \(\FCode\).

\subsection{Proof of optimal \cost during \conversion} \label{sec:general-proof}

Throughout this section, we use the following notation for submatrices: let \(M\) be a \(n \times m\) matrix, the submatrix of \(M\) defined by row indices \(\{i_1, \ldots, i_a\}\) and column indices \(\{j_1, \ldots, j_b\}\) is denoted by \(M[i_1, \ldots, i_a; j_1, \ldots, j_b]\).
For conciseness, we use \(*\) to denote all row or column indices, e.g., \(M[*;j_1,\ldots,j_b]\) denotes the submatrix composed by columns \(\{j_1,\ldots,j_b\}\), and \(M[i_1,\ldots,i_a;*]\) denotes the submatrix composed by rows \(\{i_1,\ldots,i_a\}\).

We first recall an important fact about systematic MDS codes.
\begin{prop}[\cite{TECC78}]\label{thm:superregular}
    Let \(\Code\) be an \([n, k]\) code with generator matrix \(G = [I | P]\). 
    Then \(\Code\) is MDS if and only if \(P\) is superregular, that is, every square submatrix of \(P\) is nonsingular\footnote{This definition of superregularity is different from the definition introduced in~\cite{GRS06}, which is sometimes used in the context of convolutional codes.}.
    \hfill\qed
\end{prop}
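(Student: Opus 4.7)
The plan is to reduce the MDS characterization to a statement purely about the minors of $P$ by exploiting the block structure of $G = [I \mid P]$. I will use the standard fact that an $[n,k]$ linear code is MDS if and only if every $k$ columns of its generator matrix are linearly independent, equivalently, every $k \times k$ submatrix of $G$ is nonsingular. So the goal reduces to showing that every $k \times k$ submatrix of $[I \mid P]$ is nonsingular \emph{if and only if} every square submatrix of $P$ is nonsingular.

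For the main step I would fix an arbitrary $k \times k$ submatrix of $G$, determined by a choice of $k$ column indices from $[n] = [k] \cup ([n] \setminus [k])$. Let $A \subseteq [k]$ index the columns selected from the identity block and $B \subseteq \{1, \ldots, n-k\}$ index the columns selected from $P$, so that $|A| + |B| = k$. The selected columns from $I$ are standard basis vectors $\{e_i : i \in A\}$. Performing cofactor expansion along these $|A|$ columns, or equivalently reordering rows and columns so that the submatrix takes a block-triangular form
\[
\begin{bmatrix} I_{|A|} & P[A;B] \\ 0 & P([k]\setminus A;\, B) \end{bmatrix},
\]
the determinant equals $\pm \det\bigl(P[[k]\setminus A;\, B]\bigr)$. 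Since $|[k]\setminus A| = k - |A| = |B|$, this is a square submatrix of $P$ of order $|B|$.

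The remaining step is the bookkeeping: as $(A,B)$ ranges over all pairs with $|A|+|B|=k$, $A \subseteq [k]$, $B \subseteq [n-k]$, the pair $([k]\setminus A,\, B)$ ranges over all pairs of equal-sized subsets of rows and columns of $P$, and the size $|B|$ ranges over $\{0, 1, \ldots, \min(k, n-k)\}$. Hence the set of $k \times k$ minors of $[I \mid P]$ equals, up to sign, the set of all square minors of $P$ (including the empty minor, whose determinant is $1$, corresponding to $A = [k]$, $B = \emptyset$). Therefore all $k \times k$ minors of $[I \mid P]$ are nonzero iff all square minors of $P$ are nonzero, yielding the biconditional.

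There is no real obstacle here; the argument is a classical linear-algebra bookkeeping exercise, and the only care needed is to verify the sign/ordering when extracting the block-triangular form so that the determinant identification is unambiguous. I would therefore keep the write-up short, citing $\cite{TECC78}$ for the MDS-minors characterization and presenting the block determinant computation as the core of the proof.
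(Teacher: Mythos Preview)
Your argument is correct and is exactly the standard proof of this classical fact. Note, however, that the paper itself does not supply a proof: the proposition is stated with a citation to \cite{TECC78} and closed with a \qed, so there is nothing to compare against. Your write-up would serve well as a self-contained proof should one be desired.
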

Thus, to be MDS, both \(\IPar\) and \(\FPar\) need to be superregular.

From the bound in \cref{thm:down-size:one}, to be \optimal during \conversion\ when \(\Fr \leq \Ik\), the columns of \(\FPar\) (that is, the new \encvecs) have to be such that they can be constructed by only accessing \(\Fr\) columns of \(\IGen\) (that is, the initial \encvecs) during \conversion. Thus, it suffices to show that the columns of \(\FPar\) can be constructed by accessing only \(\Fr\) columns of \(\IPar\) during \conversion.
To capture this property, we introduce the following definition.
\begin{definition}[\textbf{\Constructible{t}}]\label{def:constructible}
    We will say that an \(n \times m_1\) matrix \(M_1\) is \emph{\Constructible{t}} from an \(n \times m_2\) matrix \(M_2\) if and only if there exists a subset \(S \subseteq \Cols(M_2)\) of size \(t\), such that the \(m_1\) columns of \(M_1\) are in the span of \(S\).
    We say that a \(\lambda n \times m_1\) matrix \(M_1\) is \emph{\BlockConstructible{t}} from an \(n \times m_2\) matrix \(M_2\) if and only if for every \(i \in [\Cs]\), the submatrix \(M_1[(i - 1)n + 1, \ldots, in; *]\) is \Constructible{t} from \(M_2\).
\end{definition}

\begin{theorem}\label{thm:construction-conditions}
    A systematic \RegimeCodeDefault\ with \(\Ik \times \Ir\) initial parity generator matrix \(\IPar\) and \(\Fk \times \Fr\) final parity generator matrix \(\FPar\) is MDS and \optimal, if the following two conditions hold: (1) if \(\Ir \geq \Fr\) then \(\FPar\) is \BlockConstructible{\Fr} from \(\IPar\), and (2) \(\IPar, \FPar\) are superregular.
\end{theorem}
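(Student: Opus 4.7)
The plan is to verify the two claimed properties separately: condition~(2) will immediately yield that both codes are MDS, while condition~(1), together with the systematic structure, will yield an explicit \optimal\ \conversion\ procedure.

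For MDS-ness, since the codes are systematic with \(\IGen=[\IdMat \mid \IPar]\) and \(\FGen=[\IdMat \mid \FPar]\), \cref{thm:superregular} says that \(\ICode\) and \(\FCode\) are MDS if and only if \(\IPar\) and \(\FPar\) are superregular, so condition~(2) takes care of this with no further work.

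For access-optimality I would exhibit a \conversion\ procedure matching the lower bound of \cref{thm:min-access-cost-final}, splitting into the two branches of that bound. The unchanged \blocks\ are taken to be the systematic \blocks, so the \(\Fr\) new \blocks\ correspond precisely to the \(\Fr\) columns of \(\FPar\). In the case \(\Ir<\Fr\), condition~(1) is vacuous, but since both codes share the same systematic part (the first \(\Cs\Ik=\Fk\) systematic symbols of the final stripe are exactly the data portions of the \(\Cs\) initial stripes), simply reading all \(\Cs\Ik\) systematic \blocks\ and recomputing the \(\Fr\) new parities from \(\FPar\) uses \(\Cs\Ik+\Fr\) accesses, matching the bound. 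In the case \(\Ir\geq\Fr\), condition~(1) provides, for each \(i\in[\Cs]\), a subset \(S_i\subseteq[\Ir]\) of size \(\Fr\) and coefficients \(\alpha_{\ell,i,j}\) (\(j\in S_i,\ \ell\in[\Fr]\)) such that
\[
  \FPar[(i-1)\Ik+1,\ldots,i\Ik;\ell] \;=\; \sum_{j \in S_i} \alpha_{\ell,i,j}\,\IPar[*;j].
\]
The \conversion\ procedure is then to access, for each \(i\in[\Cs]\), the \(\Fr\) initial parity \blocks\ of stripe \(i\) indexed by \(S_i\) (whose stored values are \(\IPar[*;j]^{T}\Msg_{\StripeIndexSet{i}}\)), and for each \(\ell\in[\Fr]\) to form the \(\ell\)-th new parity \block\ as
\[
  \FPar[*;\ell]^{T}\Msg \;=\; \sum_{i\in[\Cs]}\sum_{j\in S_i}\alpha_{\ell,i,j}\bigl(\IPar[*;j]^{T}\Msg_{\StripeIndexSet{i}}\bigr),
\]
which follows by substituting the block-constructibility relation into the natural per-stripe decomposition of \(\FPar[*;\ell]^{T}\Msg\). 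This costs \(\Cs\Fr\) reads plus \(\Fr\) writes, matching the lower bound \((\Cs+1)\Fr\).

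The argument itself is a short exercise in bookkeeping; the only delicate step is correctly identifying the per-stripe row blocks of \(\FPar\) with the per-stripe portions of \(\Msg\). I expect the real difficulty to lie not in this verification but in the constructions deferred to the following subsection, namely exhibiting concrete \(\IPar\) and \(\FPar\) that simultaneously satisfy superregularity and block-constructibility.
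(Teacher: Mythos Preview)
Your proposal is correct and is precisely the expanded version of the paper's one-line proof, which simply reads ``Follows from \cref{thm:superregular} and \cref{def:constructible}.'' One minor completeness point: in your case \(\Ir\geq\Fr\) you match against the bound \((\Cs+1)\Fr\), which tacitly assumes \(\Fr\leq\Ik\); when \(\Ir\geq\Fr>\Ik\) the lower bound is \(\Fr+\Cs\Ik\) and the optimal procedure is again the trivial one of reading the \(\Cs\Ik\) systematic \blocks---the paper handles this by explicitly assuming \(\Fr\leq\min\{\Ir,\Ik\}\) just before stating the theorem.
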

\begin{proof}
    Follows from \cref{thm:superregular} and \cref{def:constructible}.
\end{proof}

Thus, we can reduce the problem of proving the optimality of a systematic MDS \codename\ in the \regime\ to that of showing that matrices \(\IPar\) and \(\FPar\) satisfy the two properties mentioned in \cref{thm:construction-conditions}.

We first show that the construction specified in \cref{sec:general-details} satisfies condition (1) of \cref{thm:construction-conditions}.
\begin{lemma}\label{thm:condition-one}
    Let \(\IPar, \FPar\) be as defined in \cref{sec:general-details}.
    Then \(\FPar\) is \BlockConstructible{\Fr} from \(\IPar\).
\end{lemma}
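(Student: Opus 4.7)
The plan is to exhibit the required subset $S \subseteq \Cols(\IPar)$ explicitly and show that every column of each $\Ik$-row block of $\FPar$ is in fact a scalar multiple (hence trivially in the span) of the corresponding column of $\IPar$. Recall we may assume $\Fr \le \Ir$, since otherwise the construction hypothesis does not invoke condition (1) of \cref{thm:construction-conditions}; thus the first $\Fr$ columns of $\IPar$ are well-defined, and these will form the set $S$.

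The key observation is a simple factorization of the exponent. Fix $i \in [\Cs]$ and consider the submatrix $\FPar[(i - 1)\Ik + 1, \ldots, i\Ik;\, *]$. Its entry in local row $a \in [\Ik]$ and column $b \in [\Fr]$ equals
\begin{equation*}
    \FPar\bigl[(i - 1)\Ik + a,\; b\bigr] \;=\; \theta^{((i - 1)\Ik + a - 1)(b - 1)} \;=\; \theta^{(i - 1)\Ik (b - 1)} \cdot \theta^{(a - 1)(b - 1)} \;=\; \theta^{(i - 1)\Ik (b - 1)} \cdot \IPar[a, b].
\end{equation*}
Therefore the $b$-th column of the $i$-th block of $\FPar$ is precisely $\theta^{(i - 1)\Ik (b - 1)}$ times the $b$-th column of $\IPar$. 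In particular, each of the $\Fr$ columns of this block lies in the one-dimensional span of a single (distinct) column of $\IPar$, and a fortiori in the span of $S = \{\IPar[*; 1], \IPar[*; 2], \ldots, \IPar[*; \Fr]\}$.

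Since $|S| = \Fr$ and this argument applies uniformly to every $i \in [\Cs]$, the definition of block-constructibility (\cref{def:constructible}) is satisfied, proving that $\FPar$ is $\Fr$-block-constructible from $\IPar$. There is essentially no obstacle here: the whole point of defining the entries of $\IPar$ and $\FPar$ by the same formula $\theta^{(i - 1)(j - 1)}$ (as opposed to, say, using different primitive elements for the two matrices) is precisely to make this exponent factorization go through, so that moving from block to block in $\FPar$ only scales each column. The genuinely nontrivial part of the overall construction — namely verifying condition (2) of \cref{thm:construction-conditions}, superregularity of $\IPar$ and $\FPar$ — is a separate claim and is not part of this lemma.
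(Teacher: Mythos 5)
Your proof is correct and is essentially identical to the paper's: the paper states the same fact in block-matrix form, writing \(\FPar\) as a vertical stack of \(\IPar[*;1,\ldots,\Fr]\,\Diag(1,\theta^{(i-1)\Ik},\ldots,\theta^{(i-1)\Ik(\Fr-1)})\) for \(i \in [\Cs]\), which is exactly your entry-wise exponent factorization \(\theta^{((i-1)\Ik+a-1)(b-1)} = \theta^{(i-1)\Ik(b-1)}\cdot\IPar[a,b]\). No meaningful difference in approach or substance.
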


\begin{proof}
Consider the first \(\Fr\) columns of \(\IPar\), which we denote as \(\IPar_{\Fr} = \IPar[*;1,\ldots,\Fr]\).
Notice that \(\FPar\) can be written as the following block matrix:
\begin{equation*}
    \FPar =
    \begin{bmatrix}
        \IPar_{\Fr} \\
        \IPar_{\Fr} \Diag(1, \theta^{\Ik}, \theta^{2\Ik}, \ldots, \theta^{\Ik(\Fr - 1)}) \\
        \IPar_{\Fr} \Diag(1, \theta^{2\Ik}, \theta^{2\cdot2\Ik}, \ldots, \theta^{2\Ik(\Fr - 1)}) \\
        \vdots \\
        \IPar_{\Fr} \Diag(1, \theta^{(\Cs - 1)\Ik}, \theta^{2(\Cs - 1)\Ik}, \ldots, \theta^{(\Cs - 1)\Ik(\Fr - 1)}) \\
    \end{bmatrix},
\end{equation*}
where \(\Diag(a_1, a_2, \ldots, a_n)\) is the \(n \times n\) diagonal matrix with \(a_1,\ldots,a_n\) as the diagonal elements.
From this representation, it is clear that \(\FPar\) can be constructed from the the first \(\Fr\) columns of \(\IPar\).
\end{proof}

It only remains to show that the construction specified in \cref{sec:general-details} satisfies condition (2) of \cref{thm:construction-conditions}, that is, that \(\IPar\) and \(\FPar\) are superregular.
To do this, we consider the minors of \(\IPar\) and \(\FPar\) as polynomials on \(\theta\).
We show that, due to the structure of the the matrices \(\IPar\) and \(\FPar\) as specified in \cref{sec:general-details}, none of these polynomials can have \(\theta\) as a root as long as the field size is sufficiently large.
Therefore none of the minors can be zero.

\begin{lemma}\label{thm:condition-two}
    Let \(\IPar, \FPar\) be as defined in \cref{sec:general-details}.
    Then \(\IPar\) and \(\FPar\) are superregular, for sufficiently large field size.
\end{lemma}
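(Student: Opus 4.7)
The plan is to view each square-submatrix determinant of $\IPar$ or $\FPar$ as a polynomial in the indeterminate $\theta$ over the prime field $\field{p}$, and to argue that each such polynomial is nonzero by isolating a \emph{unique} maximum-degree term whose coefficient is $\pm 1$. A generic $t \times t$ submatrix of $\FPar$ is indexed by strictly increasing rows $r_1 < \cdots < r_t$ in $[\Fk]$ and strictly increasing columns $c_1 < \cdots < c_t$ in $[\Fr]$, and by the Leibniz formula its determinant equals
\[
D_{R,C}(\theta) \;=\; \sum_{\sigma \in S_t} \Sign(\sigma)\, \theta^{\sum_{a=1}^{t} (r_a - 1)(c_{\sigma(a)} - 1)}.
\]
By the strict form of the rearrangement inequality, since both $(r_a - 1)_{a=1}^{t}$ and $(c_b - 1)_{b=1}^{t}$ are strictly increasing integer sequences, the exponent $\sum_{a} (r_a - 1)(c_{\sigma(a)} - 1)$ is uniquely maximized at $\sigma = \mathrm{id}$. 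Thus the term of highest degree in $D_{R,C}(\theta)$ has coefficient $\Sign(\mathrm{id}) = +1$, so $D_{R,C}(\theta)$ is a nonzero polynomial in $\theta$ over $\field{p}$. The same argument applies verbatim to every square submatrix of $\IPar$.

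Next I would form the product $P(x) = \prod_{R,C} D_{R,C}(x)$ over all square submatrices of both $\IPar$ and $\FPar$. Since there are only finitely many such submatrices and each factor is a nonzero polynomial of degree at most $(\Fk - 1)(\Fr - 1)$, the product $P$ is a nonzero polynomial in $\field{p}[x]$ whose degree $D^{*}$ depends only on the parameters $\In, \Ik, \Fn, \Fk$.

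To conclude, I would pick the field size $q$ large enough so that (i) $q - 1 \geq \Fk$, which guarantees that for any primitive element $\theta$ the powers $1, \theta, \ldots, \theta^{\Fk - 1}$ are distinct (so all row-evaluation points used in the matrices are genuinely distinct), and (ii) the number of primitive elements of $\field{q}$, namely $\phi(q-1)$, exceeds $D^{*}$. Then at least one primitive element $\theta \in \field{q}$ is not a root of $P$, and for this $\theta$ every submatrix determinant $D_{R,C}(\theta)$ is nonzero. Hence $\IPar$ and $\FPar$ are superregular, which combined with \cref{thm:condition-one} and \cref{thm:construction-conditions} establishes the lemma and the optimality of the construction.

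The main obstacle is the passage from characteristic zero to finite characteristic: a nonzero integer polynomial can in principle collapse modulo $p$, so one cannot simply invoke the classical Schur-polynomial factorization of generalized Vandermonde determinants (whose coefficients, while positive integers, could vanish over small $\field{p}$). The rearrangement-inequality argument sidesteps this by producing a \emph{distinguished} monomial whose coefficient is $\pm 1$, which survives reduction modulo any prime and thus guarantees a nonzero polynomial over every prime field.
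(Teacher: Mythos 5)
Your core argument coincides with the paper's: expand each minor of \(\IPar\) or \(\FPar\) via the Leibniz formula as a polynomial in \(\theta\) over \(\field{p}\), and observe that the exponent \(\sum_a (r_a-1)(c_{\sigma(a)}-1)\) is uniquely maximized at \(\sigma=\mathrm{id}\) — the paper proves exactly this by an explicit swap argument, which is a proof of the strict rearrangement inequality you invoke — so each minor has a distinguished leading monomial with coefficient \(+1\) and is a nonzero polynomial over any prime field. Where you genuinely diverge is the concluding step. The paper does not count roots: it sets the extension degree \(D\) to be one more than the largest degree \(E^*\) attained by any \emph{single} minor (computed explicitly, cubic in the parameters) and takes \(q=p^D\); then \emph{any} primitive element of \(\field{q}\) generates \(\field{q}\) over \(\field{p}\), so its minimal polynomial has degree \(D>E^*\) and it cannot be a root of any of the nonzero minor polynomials. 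Your route multiplies all minors together and union-bounds their roots against the count \(\phi(q-1)\) of primitive elements. This is valid, but it buys less: it only shows that \emph{some} primitive element works, whereas the construction in \cref{sec:general-details} takes \(\theta\) to be an arbitrary primitive element, so you would have to amend the construction to select the particular \(\theta\) your counting argument produces; and your \(D^*\) sums degrees over all (exponentially many) square submatrices, so the field size your bound demands is governed by that sum rather than by the maximum degree of one minor. Two small slips that do not affect correctness: the degree of a \(t\times t\) minor is \(\sum_{a=1}^{t}(r_a-1)(c_a-1)\), which can be as large as roughly \(t(\Fk-1)(\Fr-1)\), not \((\Fk-1)(\Fr-1)\); and your condition (i) on the distinctness of \(1,\theta,\dots,\theta^{\Fk-1}\) plays no role in the argument.
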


\begin{proof}
    Let \(\mathbf{R}\) be a \(t \times t\) submatrix of \(\IPar\) or \(\FPar\), determined by the row indices \(i_1 < i_2 < \cdots < i_t\) and the column indices \(j_1 < j_2 < \cdots < j_t\), and denote entry \((i, j)\) of \(\mathbf{R}\) as \(\mathbf{R}[i, j]\).
    The determinant of \(\mathbf{R}\) is defined by the Leibniz formula:
    \begin{gather}
        \det(\mathbf{R})
        = \sum_{\sigma \in \Perm(t)} \Sign(\sigma) \prod_{l = 1}^{t} \mathbf{R}[l, \sigma(l)]
        = \sum_{\sigma \in \Perm(t)} \Sign(\sigma) \theta^{E_\sigma} \label{eq:det-R}\\
        \text{where}\qquad E_\sigma = \sum_{l = 1}^{t} (i_l - 1) (j_{\sigma(l)} - 1),
    \end{gather}
    \(\Perm(t)\) is the set of all permutations on \(t\) elements, and \(\Sign(\sigma) \in \{-1, 1\}\) is the sign of the permutation \(\sigma \in \Perm(t)\) (the sign of a permutation \(\sigma\) depends on the number of inversions in \(\sigma\)).
    Clearly, \(\det(\mathbf{R})\) defines a univariate polynomial \(f_{\mathbf{R}} \in \field{p}[\theta]\).
    We will now show that \(\deg(f_{\mathbf{R}}) = \sum_{l = 1}^{t} (i_l - 1)(j_l - 1)\) by showing that there is a unique permutation \(\sigma^* \in \Perm(t)\) for which \(E_{\sigma^*}\) achieves this value, and that this is the maximum over all permutations in \(\Perm(t)\).
    This means that \(f_{\mathbf{R}}\) has a leading term of degree \(E_{\sigma^*}\).
    
    To prove this, we show that any permutation \(\sigma \in \Perm(t) \backslash \{\sigma^*\}\) can be modified into a permutation \(\sigma'\) such that \(E_{\sigma'} > E_{\sigma}\).
    Specifically, we show that \(\sigma^* = \Id\), the identity permutation.
    Consider \(\sigma \in \Perm(t) \backslash \{\Id\}\): let \(a\) be the smallest index such that \(\sigma(a) \neq a\), let \(b = \sigma^{-1}(a)\), and let \(c = \sigma(a)\).
    Let \(\sigma'\) be such that \(\sigma'(a) = a\), \(\sigma'(b) = c\), and \(\sigma'(d) = \sigma(d)\) for \(d \in [t] \backslash \{a, b\}\).
    In other words, \(\sigma'\) is the result of ``swapping'' the images of \(a\) and \(b\) in \(\sigma\).
    Notice that \(a < b\) and \(a < c\).
    Then, we have that:
    \begin{align}
        E_{\sigma'} - E_\sigma
        &= (i_{a} - 1)(j_{a} - 1) + (i_{b} - 1)(j_{c} - 1) - (i_{a} - 1)(j_{c} - 1) - (i_{b} - 1)(j_{a} - 1) \\
        &= (i_{b} - i_{a})(j_{c} - j_{a}) > 0
    \end{align}
    The last inequality comes from the fact that \(a < b\) implies \(i_a < i_b\) and \(a < c\) implies \(j_a < j_c\).
    Therefore, \(\deg(f_{\mathbf{R}}) = \max_{\sigma \in \Perm(t)} E_\sigma = E_{\Id}\).
    
    Let \(E^*(\Cs, \Ik, \Ir, \Fr)\) be the maximum degree of \(f_{\mathbf{R}}\) over all submatrices \(\mathbf{R}\) of \(\IPar\) or \(\FPar\).
    Then, \(E^*(\Cs,\Ik,\Ir,\Fr)\) corresponds to the diagonal with the largest elements in \(\IPar\) or \(\FPar\).
    In \(\FPar\) this is the diagonal of the square submatrix formed by the bottom \(\Fr\) rows.
    In \(\IPar\) it can be either the diagonal of the square submatrix formed by the bottom \(\Ir\) rows, or by the right \(\Ik\) columns.
    Thus, we have that:
    \begin{align*}
        E^*(\Cs, \Ik, \Ir, \Fr) &= \max\left\{ \sum_{i = 0}^{\Fr - 1} i(\Cs\Ik - \Fr + i), \sum_{i=0}^{\Ir - 1} i(\Ik - \Ir  + i), \sum_{i = 0}^{\Ik - 1} i(\Ir - \Ik + i) \right\}\\
        &= (1/6) \cdot \max
        \begin{Bmatrix}
        \Fr(\Fr - 1)(3\Cs\Ik - \Fr - 1),\\
        \Ir(\Ir - 1)(3\Ik - \Ir - 1),\\
        \Ik(\Ik - 1)(3\Ir - \Ik - 1)
        \end{Bmatrix}.
    \end{align*}
    Let \(D = E^*(\Cs, \Ik, \Ir, \Fr) + 1\).
    Then, if \(\det(\mathbf{R}) = 0\) for some submatrix \(\mathbf{R}\), \(\theta\) is a root of \(f_{\mathbf{R}}\), which is a contradiction since \(\theta\) is a primitive element and the minimal polynomial of \(\theta\) over \(\field{q}\) has degree \(D > \deg(f_{\mathbf{R}})\) \cite{TECC78}.
\end{proof}
This construction is practical only for small values of these parameters since the required field size grows rapidly with the lengths of the initial and final codes.
In \cref{sec:simple-hankel} we present practical low-field-size constructions. 

Combining the above results leads to the following key result on the achievability of the lower bounds on \cost derived in Section~\ref{sec:merge}.
\begin{theorem}
    The explicit construction provided in \cref{sec:general-details} yields \optimal\ linear MDS \codenames\ for all parameter values in the \regime.
\end{theorem}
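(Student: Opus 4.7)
The plan is to assemble the theorem directly from the pieces already established in \cref{sec:general-details,sec:general-proof}. First I would dispose of the boundary cases that the construction sidesteps, namely when \(\Ir < \Fr\) or \(\Ik \leq \Fr\). In those cases, as noted at the start of \cref{sec:general-details}, the lower bound of \cref{thm:min-access-cost-final} reduces to \(\Fr + \Cs\Ik\), which is trivially achievable: one accesses all \(\Cs\Ik\) systematic blocks of the initial stripes, decodes the message \(\Msg\), and writes the \(\Fr\) new parity blocks of \(\FCode\). So only the regime \(\Fr \leq \min\{\Ir, \Ik\}\) requires the explicit construction.

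Next, for that nontrivial regime I would invoke \cref{thm:construction-conditions}, which reduces optimality and the MDS property of the \codename\ to two conditions on the parity-generator submatrices \(\IPar\) and \(\FPar\): \BlockConstructible{\Fr}-ness of \(\FPar\) from \(\IPar\), and superregularity of both \(\IPar\) and \(\FPar\). \Cref{thm:condition-one} verifies the first condition by exhibiting the explicit block-diagonal factorization of \(\FPar\) in terms of the first \(\Fr\) columns of \(\IPar\). \Cref{thm:condition-two} verifies the second by showing that the determinant polynomial of any square submatrix, viewed as a polynomial in \(\theta\), has degree strictly less than the degree \(D\) of the minimal polynomial of the primitive element \(\theta\), so it cannot vanish at \(\theta\). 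Together these two lemmas discharge both hypotheses of \cref{thm:construction-conditions}, yielding MDS initial and final codes and a \conversion\ that accesses only \(\Fr\) blocks per initial stripe and writes \(\Fr\) new blocks, for a total \cost\ of \(\Fr + \Cs\Fr = \Fr + \Cs\min\{\Ik,\Fr\}\).

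Finally, I would compare this achieved \cost\ with the lower bound from \cref{thm:min-access-cost-final}. In the nontrivial regime \(\Fr \leq \Ir < \Ik\) the bound is exactly \(\Fr + \Cs\min\{\Ik,\Fr\} = \Fr + \Cs\Fr\), matching the \cost\ of the construction; in the boundary cases the naive full-read strategy described above matches the corresponding bound \(\Fr + \Cs\Ik\). Since the number of unchanged blocks in the construction is \(\Cs\Ik\) (the systematic coordinates of \(\FCode\)), the code is stable, consistent with \cref{thm:optimal-stable}. The main obstacle in the underlying argument is really \cref{thm:condition-two}, whose field-size bound is large; but since that lemma is already stated and proved earlier, the present theorem is essentially a bookkeeping step, and the only care needed is to handle the boundary cases \(\Ir < \Fr\) and \(\Ik \leq \Fr\) separately so as to cover \emph{all} parameter values in the \regime.
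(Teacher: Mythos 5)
Your proposal is correct and follows essentially the same route as the paper, whose proof simply cites \cref{thm:construction-conditions}, \cref{thm:condition-one}, and \cref{thm:condition-two}; your additional bookkeeping for the boundary cases \(\Ir < \Fr\) and \(\Ik \leq \Fr\) matches the remark at the start of \cref{sec:general-details} and is a welcome explicit completion rather than a deviation.
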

\begin{proof}
    Follows from \cref{thm:construction-conditions}, \cref{thm:condition-one}, and \cref{thm:condition-two}.
\end{proof}

\section{Low field-size constructions based on superregular Hankel arrays}\label{sec:simple-hankel}

In this section we present alternative constructions for \RegimeCodesDefault\ that require a significantly lower (polynomial) field size than the general construction presented in \cref{sec:construction}.

\textbf{Key idea.} The key idea behind our constructions is to take the matrices \(\IPar\) and \(\FPar\) as submatrices from a specially constructed triangular array of the following form:
\begin{equation}\label{eq:hankel-array}
    T_m:
    \begin{array}{cccccc}
        b_1 & b_2 & b_3 & \cdots & b_{m - 1} & b_m \\
        b_2 & b_3 & \cdots & \cdots & b_m\\
        b_3 & \vdots & \adots & \adots\\
        \vdots & \vdots & \adots\\
        b_{m - 1} & b_m\\
        b_m\\
    \end{array}
\end{equation}
such that every submatrix of \(T_m\) is superregular. Here, (1) \(b_1, \ldots, b_m\) are (not necessarily distinct) elements from \(\field{q}\), and (2) \(m\) is at most the field size \(q\).
The array \(T_m\) is said to have Hankel form, which means that \(T_m[i,j] = T_m[i - 1, j + 1]\), for all \(i \in [2, m],\, j \in [m - 1]\).
We denote \(T_m\) a \emph{superregular Hankel array}.
Such an array can be constructed by employing the algorithm proposed in \cite{RG85} (where the algorithm was employed to construct {generalized Cauchy matrices} to yield generalized Reed-Solomon codes).
We note that the algorithm outlined in \cite{RG85} takes the field size \(q\) as input, and generates \(T_q\) as the output.
It is easy to see that \(T_q\) thus generated can be truncated to generate the triangular array \(T_m\) for any \(m \leq q\).

We construct the initial and final codes by taking submatrices \(\IPar\) and \(\FPar\) from superregular Hankel arrays (the submatrices have to be contained in the triangle where the array is defined).
This guarantees that \(\IPar\) and \(\FPar\) are superregular. In addition, we exploit the Hankel form of the array by carefully choosing the submatrices that form \(\IPar\) and \(\FPar\) to ensure that \(\FPar\) is \BlockConstructible{\Fr} from \(\IPar\).
Given the way we construct these matrices and the properties of \(T_m\), all the initial and final codes presented in this subsection are \emph{generalized doubly-extended Reed-Solomon} codes \cite{RG85}.

The above idea yields a sequence of constructions with a tradeoff between the field size and the range of \(\Fr\) supported.
We first present the two constructions at the extreme ends of this tradeoff, which we call \textit{\hankelone} and \textit{\hankeltwo}.
Construction \hankelone, described in \cref{sec:hankelone}, can be applied whenever \(\Fr \leq \lfloor \Ir / \Cs \rfloor\), and requires a field size of \(q \geq \max\{\In - 1, \Fn - 1\}\).
Construction \hankeltwo\ , described in \cref{sec:hankeltwo}, can be applied whenever \(\Fr \leq \Ir - \Cs + 1\), and requires a field size of \(q \geq \Ik\Ir\).
We then discuss the constructions that fall in between these two constructions in the tradeoff between field size and coverage of \(\Fr\) values in \cref{sec:hankelthree}. In \cref{sec:hankelthree} we also provide a discussion on the ability of these constructions to be optimal even when parameters of the final code are a priori unknown. 
Throughout this section we will assume that \(\Cs \leq \Ir \leq \Ik\).
The ideas presented here are still applicable when \(\Ir > \Ik\), but the constructions and analysis change in minor ways.

\subsection{\hankelone\ construction} \label{sec:hankelone}

\hankelone\ construction provides an \optimal\ linear MDS \RegimeCodeDefault\ when \(\Fr \leq \lfloor \Ir / \Cs \rfloor\), and requires a field size of \(q \geq \max\{\In - 1, \Fn - 1\}\).
Notice that this construction has no penalty in terms of field size for \optimal\ \conversion, since it has the same field size requirement as the maximum between a pair of \([\In, \Ik]\) and \([\Fn, \Fk]\) Reed-Solomon codes \cite{TECC78}.
We start by illustrating the construction with an example.

\begin{figure}
    \centering
    \includegraphics[width=.5\textwidth]{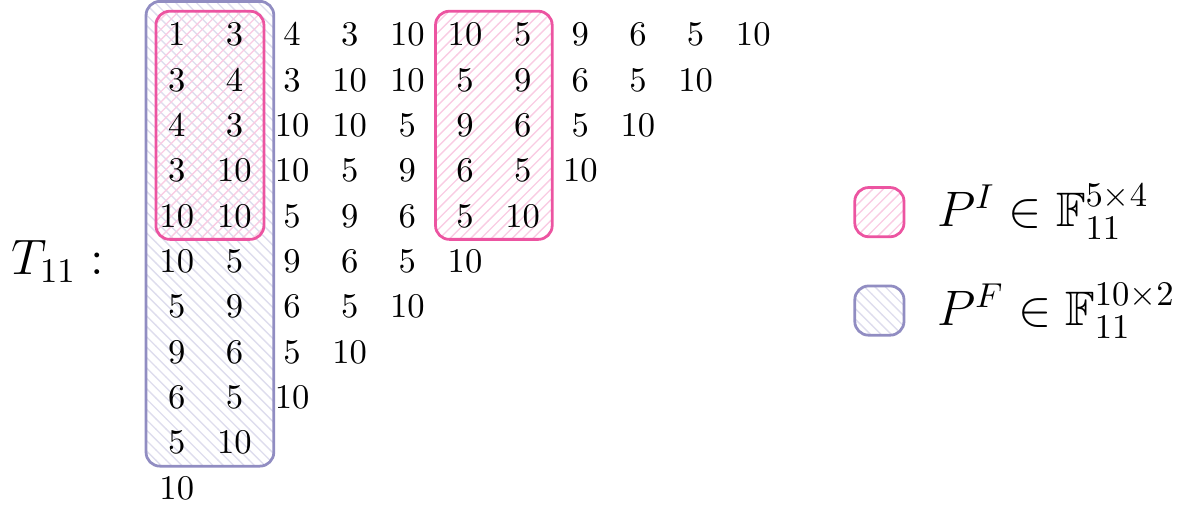}
    \caption{
        \hankelone\ construction parity generator matrices for systematic \ParamCode{9}{5}{12}{10}.
        Notice how matrix \(\FPar\) corresponds to the vertical concatenation of the first two columns and the last two columns of matrix \(\IPar\).
    }
    \label{fig:hankel_ex}
\end{figure}

\begin{example} \label{ex:hankelone}
    Consider the parameters \(\Params{9}{5}{12}{10}\).
    First, we construct a superregular Hankel array of size \(\Fn - 1 = 11\), \(T_{11}\), employing the algorithm in~\cite{RG85}.
    Then choose \(\IPar\) and \(\FPar\) from \(T_{11}\) as shown in \cref{fig:hankel_ex}.
    Checking that these matrices are superregular follows from the superregularity of \(T_{11}\).
    Furthermore, notice that the chosen parity matrices have the following structure:
    \[
    \IPar =
    \begin{bmatrix}
        \vline & \vline & \vline & \vline \\
        \V{p}_1 & \V{p}_2 & \V{p}_3 & \V{p}_4 \\
        \vline & \vline & \vline & \vline \\
    \end{bmatrix}
    \qquad
    \FPar =
    \begin{bmatrix}
        \V{p}_1 & \V{p}_2 \\
        \V{p}_3 & \V{p}_4 \\
    \end{bmatrix}
    \]
    From this structure, it is clear that \(\FPar\) is \BlockConstructible{2} from \(\IPar\).
    The field size required for this construction is \(\Fn - 1 = 11\).
\end{example}

\textbf{General construction.} Now we describe how to construct \(\IPar, \FPar\) for all valid parameters \(\Cs, \Ik, \Ir, \Fr\), where \(\Fr \leq \lfloor \Ir / \Cs \rfloor\).
As seen in \cref{ex:hankelone}, this construction works by splitting the \encvecs\ corresponding to the \(\Ir\) initial parities into \(\Cs\) groups, which are then combined to obtain the (at most) \(\lfloor \Ir / \Cs \rfloor\) new \encvecs.

Let \(T_m\) be as defined in \cref{eq:hankel-array}, with \(m = \Fn - 1\).
Choose \(\FPar\) to be the \(\Fk \times \Fr\) submatrix of the top-left elements of \(T_m\).
Denote the \(\Ik \times ((\Cs - 1)\Ik + \Fr)\) submatrix of the top-left elements of \(T_m\) as \(\textbf{Q}\):
\begin{equation*}
    \FPar =
    \begin{bmatrix}
        b_1 & \cdots & b_{\Fr} \\
        \vdots & \ddots & \vdots \\
        b_{\Cs\Ik} & \cdots & b_{\Cs\Ik + \Fr - 1} \\
    \end{bmatrix}
    \qquad
    \mathbf{Q} =
    \begin{bmatrix}
        b_1 & \cdots & b_{(\Cs - 1)\Ik + \Fr} \\
        \vdots & \ddots & \vdots \\
        b_\Ik & \cdots & b_{\Cs\Ik + \Fr - 1} \\
    \end{bmatrix}
\end{equation*}
We choose \(\IPar\) to be any \(\Ik \times \Ir\) submatrix of \(\mathbf{Q}\) that includes columns \(\{l, \Ik + l, \ldots, (\Cs - 1)\Ik + l\}\).
The Hankel form of array \(T_m\) implies that \(T_m[\Ik(i - 1) + j, l] = T_m[j, \Ik(i - 1) + l]\) for all \(i \in [\Cs],\, j \in [\Ik]\).
As a consequence, we have that the \(l\)-th column of \(\FPar\) is equal to the vertical concatenation of columns \((l, \Ik + l, \ldots, (\Cs - 1)\Ik + l)\) of \(\mathbf{Q}\).

Since both \(\IPar\) and \(\FPar\) are submatrices of \(T_m\), they are superregular.
Furthermore, since every column of \(\FPar\) is the concatenation of \(\Cs\) columns of \(\IPar\), it is clear that \(\FPar\) is \BlockConstructible{\Fr} from \(\IPar\).
Thus \(\IPar\) and \(\FPar\) satisfy both the sufficient properties laid out in \cref{thm:construction-conditions}, and hence \hankelone\ construction is \optimal during \conversion.

\textbf{(Access-optimal) Conversion process.}
During \conversion, the \(\Ik\) data \blocks\ from each of the \(\Cs\) initial stripes remain unchanged, and become the \(\Fk = \Cs\Ik\) data \blocks\ from the final stripe as detailed below.
The \(\Fr\) new (parity) blocks from the final stripe are constructed by accessing \blocks\ from the initial stripes.
To construct the \(l\)-th new \block\ (corresponding to the \(l\)-th column of \(\FPar\), \(l \in [\Fr]\)), read parity \block\ \((i - 1)\Ik + l\) from each initial stripe \(i \in [\Cs]\), and then sum the \(\Cs\) \blocks\ read.
The \encvec\ of the new \block\ will be equal to the sum of the \encvecs\ of the \blocks\ read (recall from \cref{sec:notation} that the initial \encvecs\ are embedded into a \(\Fk\) dimensional space).
This is done for every new \encvec\ \(l \in [\Fr]\).

\subsection{\hankeltwo\ construction} \label{sec:hankeltwo}

\hankeltwo\ construction, in contrast to the \hankelone\ construction above, can handle a broader range of parameter values, at the cost of a slightly larger field-size requirement.
In particular, we present a construction of \optimal\ MDS \RegimeCodeDefault\ for all \(\Fr \leq \Ir - \Cs + 1\), requiring a field size of \(q \geq \Ik\Ir\).
We start with an example illustrating this construction.
 
\begin{figure}
    \centering
    \includegraphics[width=.5\textwidth]{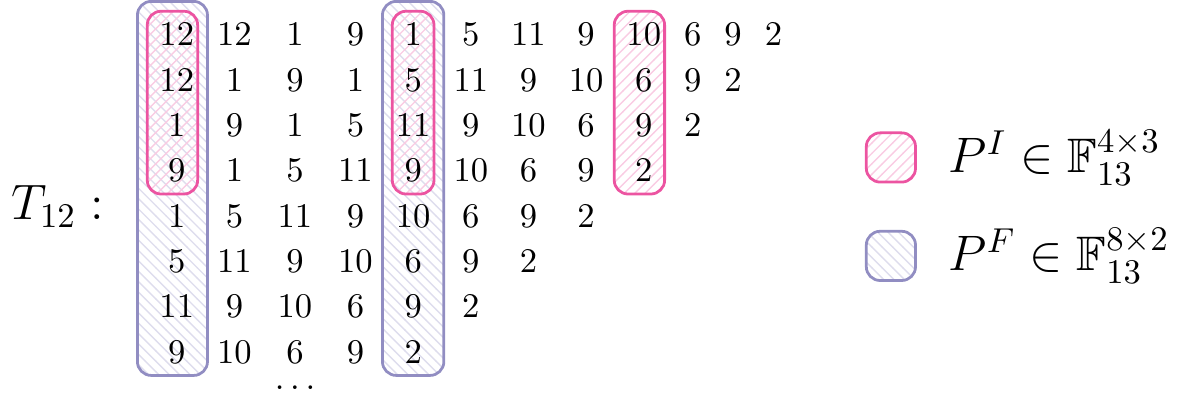}
    \caption{
        \hankeltwo\ construction parity generator matrices for systematic \ParamCode{7}{4}{10}{8}.
        Notice how matrix \(\FPar\) corresponds to the vertical concatenation of the first and second column of \(\IPar\), and the second and third column of \(\IPar\).
    }
    \label{fig:hankel_ex_2}
\end{figure}

\begin{example} \label{ex:hankeltwo}
    Consider parameters \(\Params{7}{4}{10}{8}\).
    First, we construct a superregular Hankel array of size \(\Ik\Ir = 12\), \(T_{12}\), by choosing \(q = 13\) as the field size, and employing the algorithm in \cite{RG85}.
    Then choose \(\IPar\) and \(\FPar\) from \(T_{12}\) as shown in \cref{fig:hankel_ex_2}.
    Both matrices are superregular by the superregularity of \(T_{12}\).
    Notice that the chosen parity matrices have the following structure:
    \[
    \IPar =
    \begin{bmatrix}
        \vline & \vline & \vline \\
        \V{p}_1 & \V{p}_2 & \V{p}_3 \\
        \vline & \vline & \vline \\
    \end{bmatrix}
    \qquad
    \FPar =
    \begin{bmatrix}
        \V{p}_1 & \V{p}_2 \\
        \V{p}_2 & \V{p}_3 \\
    \end{bmatrix}
    \]
    It is easy to see that \(\FPar\) is \BlockConstructible{2} from \(\IPar\).
\end{example}

\textbf{General construction.} Now we describe how to construct \(\IPar\) and \(\FPar\) for all valid parameters \(\Cs, \Ik, \Ir, \Fr\) such that \(\Fr \leq \Ir - \Cs + 1\).
As seen in \cref{ex:hankeltwo}, this construction works by choosing the \(\Ir\) initial parity \encvecs\ so that any \(\Cs\) consecutive initial parity \encvecs\ can be combined into a new \encvec.

Let \(T_m\) be as in \cref{eq:hankel-array}, with \(m \geq \Ik\Ir\).
We take \(\IPar\) and \(\FPar\) as the following submatrices of \(T_m\):
\begin{equation*}
    \IPar =
    \begin{bmatrix}
        b_1 & b_{\Ik + 1} & \cdots & b_{(\Ir - 1)\Ik + 1} \\
        b_2 & b_{\Ik + 2} & \cdots & b_{(\Ir - 1)\Ik + 2} \\
        \vdots & \vdots & \ddots & \vdots \\
        b_{\Ik} & b_{2\Ik} & \cdots & b_{\Ir\Ik} \\
    \end{bmatrix}
    \qquad
    \FPar =
    \begin{bmatrix}
        b_1 & b_{\Ik + 1} & \cdots & b_{(\Fr - 1)\Ik + 1} \\
        b_2 & b_{\Ik + 2} & \cdots & b_{(\Fr - 1)\Ik + 2} \\
        \vdots & \vdots & \ddots & \vdots \\
        b_{\Cs\Ik} & b_{(\Cs + 1)\Ik} & \cdots & b_{(\Cs + \Fr - 1)\Ik} \\
    \end{bmatrix}
\end{equation*}
The Hankel form of array \(T_m\) guarantees that the \(i\)-th column of \(\FPar\) corresponds to the concatenation of columns \((i, i + 1, \ldots, i + \Cs - 1)\) of \(\IPar\).
Thus, \(\FPar\) is \BlockConstructible{\Fr} from \(\IPar\).
Furthermore, since \(\IPar\) and \(\FPar\) are submatrices of \(T_m\), they are superregular.

\textbf{(Access-optimal) Conversion process.} 
During \conversion, the \(\Ik\) data \blocks\ from each of the \(\Cs\) initial stripes remain unchanged, and become the \(\Fk = \Cs\Ik\) data \blocks\ from the final stripe.
The \(\Fr\) new (parity) blocks from the final stripe are constructed by accessing \blocks\ from the initial stripes as detailed below.
To construct the \(l\)-th new \block\ (corresponding to the \(l\)-th column of \(\FPar\), \(l \in [\Fr]\)), read parity \block\ \(l + i - 1\) from each initial stripe \(i \in [\Cs]\), and then sum the \(\Cs\) \blocks\ read. The \encvec\ of the new \block\ will be equal to the sum of the \encvecs\ of the \blocks\ read (recall from \cref{sec:notation} that the initial \encvecs\ are embedded into a \(\Fk\) dimensional space).
This is done for every new \encvec\ \(l \in [\Fr]\).

\subsection{Sequence of Hankel-based constructions and Handling a priori unknown parameters} \label{sec:hankelthree}
\textbf{Sequence of Hankel-based constructions.}
Our idea of Hankel-array-based construction yields a sequence of \optimal MDS \codenames with a tradeoff between field size and the range of \(\Fr\) supported. The two constructions presented in \cref{sec:hankelone} and \cref{sec:hankeltwo} are the two extreme points of this tradeoff.

In particular, our construction can support, for all \(s \in \{\Cs, \Cs + 1, \ldots, \Ir\}\):
\[
    \Fr \leq (s - \Cs + 1)\left\lfloor\frac{\Ir}{s}\right\rfloor
    + \max\{(\Ir \Mod s) - \Cs + 1, 0\},
    \quad
    \text{ for }
    q \geq s\Ik + \left\lfloor\frac{\Ir}{s}\right\rfloor - 1.
\]
The parameter \(s\) corresponds to the number of groups into which the \encvecs\ corresponding to the \(\Ir\) initial parities are split.
That is, each group of consecutive initial parity \encvecs\ has size \(\lfloor \Ir / s \rfloor\) or \(\lceil \Ir / s \rceil\).
The \hankelone\ construction corresponds to \(s = \Cs\) and \hankeltwo\ corresponds to \(s = \Ir\).

\textbf{Handling a priori unknown parameters.} So far, we had assumed that the parameters of the final code, \(\Fn, \Fk\), are known {a priori} and are fixed. As discussed in Section~\ref{sec:conversion}, this is useful in developing an understanding of the fundamental limits of code \conversion. When realizing code \conversion in practice, however, the parameters \(\Fn, \Fk\) might not be known at code construction time (as it depends on the empirically observed failure rates). Thus, it is of interest to be able to \textit{\convert a code optimally to multiple different parameters}. 
The Hankel-array based constructions presented above indeed provide such a flexibility. Our constructions continue to enable \optimal \conversion for any \({\Fk}' = \Cs'\Ik\) and \({\Fn}' = {\Fr}' + {\Fk}'\) with \(0 \leq {\Fr}' \leq \Fr\) and \(2 \leq {\Cs}' \leq \Cs\).

\section{Conclusions and Future directions}\label{sec:conclusion}
In this paper, we propose the ``\codeconversion'' problem, that models the problem of converting data encoded with an \([\In, \Ik]\) code into data encoded with an \([\Fn, \Fk]\) code in a resource-efficient manner.
The proposed problem is motivated by the practical necessity of reducing the overhead of redundancy adaptation in erasure-coded storage systems. This is a new opportunity beckoning coding theorists to enable large-scale real-world storage systems to adapt their redundancy levels to varying failure rates of storage devices, thereby achieving significant savings in resources and energy consumption.
We present the framework of \textit{\codenames} for studying \codeconversions, and fully characterize the fundamental limits for the \cost of \conversions for an important regime of \codenames. Furthermore, we present practical low-field-size constructions for \optimal \codenames  for a wide range of parameters.

This work leads to a number of challenging an potentially impactful open problems. 
An important future direction is to go beyond the \regime considered in this paper and study the fundamental limits on the access cost and construct optimal \codenames for general parameter regimes. Another important future direction is to analyze the fundamental limits on the overhead of other cluster resources during \codeconversions, such as network bandwidth, disk IO, and CPU consumption, and construct \codenames optimizing these resources. Note that while the \optimal \codenames, considered in this paper, also reduce the total network bandwidth, disk IO, and CPU overhead during \conversion as compared to the default approach, the overhead on these other resources may not be optimal.

\section*{Acknowledgements}

We thank Michael Rudow for his valuable feedback and helpful comments during the writing of this paper.

\bibliographystyle{IEEEtran}
\bibliography{IEEEabrv,main}

\end{document}